\colorlet{MyBlue}{DodgerBlue!75!Black}
\colorlet{MyGreen}{DarkGreen!85!Black}
\newcommand{\Paragraph}[1]{\medskip\paragraph{\textbf{#1}}}
\newcommand\tikzmark[2]{%
\tikz[remember picture,baseline] \node[above, outer sep=0pt, inner sep=0pt] (#1){\phantom{#2}};%
}
\newcommand\link[2]{%
\begin{tikzpicture}[remember picture, overlay, >=stealth, shift={(0,0)}]
  \draw[->, line width=0.7mm] (#1) to (#2);
\end{tikzpicture}%
}
\newcommand{\debug}[1]{#1}		
\theoremstyle{plain}
\newtheorem{theorem}{Theorem}		
\newtheorem{lemma}[theorem]{Lemma}		
\newtheorem{proposition}[theorem]{Proposition}		
\newtheorem*{corollary*}{Corollary}		
\theoremstyle{definition}
\newtheorem{definition}[theorem]{Definition}		
\newtheorem*{definition*}{Definition}		
\newtheorem*{assumption*}{Assumptions}		
\theoremstyle{remark}
\newtheorem*{remark*}{Remark}		
\newtheorem*{example*}{Example}		
\newcounter{proofpart}
\newenvironment{proofpart}[1]
{\vspace{3pt}
\refstepcounter{proofpart}%
\par\textit{Part~\arabic{proofpart}:~#1}.\,}
{\smallskip}
\DeclarePairedDelimiter{\bracks}{[}{]}		
\DeclarePairedDelimiter{\parens}{(}{)}		
\DeclarePairedDelimiter{\abs}{\lvert}{\rvert}		
\DeclarePairedDelimiter{\ceil}{\lceil}{\rceil}		
\DeclarePairedDelimiter{\floor}{\lfloor}{\rfloor}		
\DeclarePairedDelimiterX{\setdef}[2]{\{}{\}}{#1:#2}		
\DeclarePairedDelimiterXPP{\exclude}[1]{\mathopen{}\setminus}{\{}{\}}{}{#1}
\newcommand{\cf}{cf.\xspace}		
\newcommand{\eg}{e.g.,\xspace}		
\newcommand{\ie}{i.e.,\xspace}		
\newcommand{\textpar}[1]{\textup(#1\textup)}		
\DeclareMathOperator{\bigoh}{\mathcal O}		
\newcommand{\from}{\colon}		
\newcommand{\dd}{\:d}		
\DeclareMathOperator{\ex}{\mathbb{E}}		
\DeclareMathOperator{\prob}{\mathbb{P}}		
\DeclareMathOperator{\Var}{Var}		
\providecommand\given{}		
\DeclarePairedDelimiterXPP{\exof}[1]{\ex}{[}{]}{}{
\renewcommand\given{\nonscript\:\delimsize\vert\nonscript\:\mathopen{}} #1}
\DeclarePairedDelimiterXPP{\probof}[1]{\prob}{(}{)}{}{
\renewcommand\given{\nonscript\:\delimsize\vert\nonscript\:\mathopen{}} #1}
\renewcommand*{\arraystretch}{1.8}
\newcommand{\Rmnum}[1]{\expandafter\@slowromancap\romannumeral #1@}
\DeclareMathOperator{\cost}{cost}
\DeclareMathOperator{\wait}{wait}
\newcommand{\g}{g}
\newcommand{\boys}{\mathcal{\debug C}}
\newcommand{\nBoys}{\debug N_{\boys}}
\newcommand{\NBoys}{\debug M_{\boys}}
\newcommand{\boy}{\debug i}
\newcommand{\girls}{\mathcal{\debug P}}
\newcommand{\nGirls}{\debug N_{\girls}}
\newcommand{\NGirls}{\debug M_{\girls}}
\newcommand{\girl}{\debug j}
\newcommand{\nShort}{\debug N}
\newcommand{\match}{\debug w}
\newcommand{\rate}{\debug \lambda}
\newcommand{\horizon}{\debug T}
\newcommand{\CS}{\ensuremath{\mathtt{CS}}\xspace}
\newcommand{\balanced}{\mathtt{balanced}}
\newcommand{\patient}{\mathtt{patient}}
\newcommand{\greedy}{\mathtt{greedy}}
\newcommand{\FIFO}{\mathtt{FCFS}}
\begin{document}




\title
[Breaking points in dynamic markets]
{Quick or cheap?\\
Breaking points in dynamic markets$^{\ast}$}

\author
[P.~Mertikopoulos]
{Panayotis Mertikopoulos$^{\diamond}$}
\email{panayotis.mertikopoulos@imag.fr}

\author
[H.~H.~Nax]
{Heinrich H.~Nax$^{\S}$}
\email{heinrich.nax@uzh.ch}

\author
[B.~S.~R.~Pradelski]
{Bary~S.~R.~Pradelski$^{\diamond}$}
\email{bary.pradelski@cnrs.fr}

\address{$^{\ast}$
\normalfont
The paper has benefited from comments by Vahideh Manshadi, Igal Milchtaich, Jonathan Newton, Sven Seuken, Philipp Strack and seminar participants at Bar-Ilan University, the Paris Game Theory Seminar, and the INFORMS Workshop on Market Design 2019. We thank Simon Jantschgi and Dimitrios Moustakas for careful proof-reading. All errors are ours. PM benefited from the support of the COST Action CA16228 ``European Network for Game Theory'' (GAMENET). HN benefited from the support of the ERC Advanced Investigator Grant `Momentum' (No. 324247).
BP benefited from the support of the ANR grants 15-IDEX-02, 11LABX0025-01, ALIAS and the Oxford-Man Institute of Quantitative Finance.}

\address{$^{\diamond}$ Univ. Grenoble Alpes, CNRS, Inria, Grenoble INP, LIG, 38000 Grenoble, France}
\address{$^{\S}$ Univ. Zurich, 8050 Zurich, Switzerland}



\newcommand{\acli}[1]{\textit{\acl{#1}}}		
\newcommand{\aclip}[1]{\textit{\aclp{#1}}}		
\newcommand{\acdef}[1]{\textit{\acl{#1}} \textup{(\acs{#1})}\acused{#1}}		
\newcommand{\acdefp}[1]{\emph{\aclp{#1}} \textup(\acsp{#1}\textup)\acused{#1}}	

\newacro{ME}{matching event}
\newacro{FCFS}{first-come, first-served}
\newacro{CS}{clearing schedule}
\newacro{LHS}{left-hand side}
\newacro{RHS}{right-hand side}
\newacro{iid}[i.i.d.]{independent and identically distributed}
\newacro{NE}{Nash equilibrium}
\newacroplural{NE}[NE]{Nash equilibria}

\begin{abstract}
%
%

We examine two-sided markets where players arrive stochastically over time and are drawn from a continuum of types.
The cost of matching a client and provider varies, so a social planner is faced with two contending objectives:
\begin{inparaenum}
[\itshape a\upshape)]
\item
to reduce players' \emph{waiting time} before getting matched;
and
\item
to form efficient pairs in order to reduce \emph{matching costs}.
\end{inparaenum}
We show that such markets are characterized by a \emph{quick or cheap} dilemma: Under a large class of distributional assumptions, there is no `free lunch', \ie there exists no clearing schedule that is simultaneously optimal along both objectives. We further identify a unique breaking point signifying a stark reduction in matching cost contrasted by an increase in waiting time. 
Generalizing this model, we identify two regimes: one, where no free lunch exists; the other, where a window of opportunity opens to achieve a free lunch. Remarkably, greedy scheduling is never optimal in this setting.

\end{abstract}

\subjclass[2010]{Primary 91B68; secondary 91B26, 60C05.}
\keywords{%
Dynamic matching;
online markets;
market design.}

%

\allowdisplaybreaks		
\acresetall		
\maketitle

\section{Introduction}
\label{sec:introduction}

Many economic interactions require the dynamic matching of heterogeneous agents that arrive stochastically
to a two-sided market.
Examples include the dynamic matching
of clients and providers in markets for jobs and services,
of buyers and sellers in financial markets,
of taxis and passengers on road networks, 
of donors and recipients in organ exchanges, etc.\footnote{More generally, we focus on markets for  `nondurables';
for classic studies on product durability and market performance see \citet{Smi62}, \citet{Smi88}, and \citet{Dic12}.}

It is known that many of these markets vary substantially in terms of efficiency \citep{Rot94,Rot97}.
The focus of our investigation is on a crucial aspect of market design in this context, namely the \emph{scheduling of clearing events}. 
The goal of ``making a thick market'' \citep{Rot12} is to find the best schedule of market clearing so that sufficient clients and providers are in the market to allow for stable and efficient matches over time while not waiting excessively.
Designing an optimal clearing policy thus requires optimizing along the following two objectives:
\begin{enumerate}
\item
To reduce the coexistence of agents on the two sides of the market.
\item
To match parties in such a way so as to minimize cost (or maximize productivity).
\end{enumerate}
In pursuit of these two goals, clearing schedules need to be formulated to address the following key question:
\emph{How long should the social planner wait between two clearing events?}

To illustrate the above, consider the example of a governmental employment bureau faced with a dynamically evolving job market where job offers and job seekers arrive to the system stochastically over time. The bureau has two aims, namely to reduce the coexistence of vacancies and job seekers, and to match vacancies with the skills of individual job seekers so as to maximize productivity.
Waiting times incur costs via unemployment benefits, as well as costs due to productivity losses incurred by badly staffed vacancies.

To gain in generality, we abstract away from application-specific details
(such as the particular structure of the
application and recruitment processes). 
This allows us to focus on the trade-offs between two different and concurrent objectives, waiting time  and matching cost. Perhaps surprisingly, this \emph{quick or cheap} dilemma is not easily resolvable as greedy scheduling policies are generally not optimal in this context.


\Paragraph{Related work}
\label{sec:related}


Dating back to the 1950s, the first related strand of work focuses on behavioral aspects underlying the dynamics of unemployment and job vacancies in labor markets \citep{Dow58}.
These analyses identify avenues to reduce \emph{waiting} \textendash\ \ie the coexistence of unemployment and vacancies \textendash\ by better understanding the behavior of job seekers and job providers.
Lines of reasoning proposed to explain the coexistence of unemployment and vacancies include the classical search models of \citet{McC70}, \citet{Mor70}, and \citet{Luc74}, as well as more recent models with workforce inertia due to \citet{Shim07}.%
\footnote{Note that \citet{Shim07} terms his explanandum ``mismatch'' (as opposed to ``waiting''), a term the matching literature uses to describe suboptimal matchings, which may be confusing.}
We complement this literature with a view that some degree of waiting is actually beneficial from a social welfare perspective as it enables market thickening -- which in turn enables mismatch reduction.
To illustrate this, consider the example of \citet{Shim07},  where some laid-off steel workers are not immediately given vacant positions as nurses. This may indeed be deemed optimal by a social planner when \textendash\ by delaying their match \textendash\ these nurse vacancies eventually are taken up by better nurses and the jobless steel workers find other jobs in the steel industry that might become available in the future.%
\footnote{\emph{Waiting} is explained behaviorally through inertia in \citet{Shim07}, that is, by the argument that steel workers stay close to their factories hoping that they reopen;
\citet{Luc74} propose a different interpretation whereby waiting is due to the fact that steel workers must actively spend some time searching for these nursing jobs elsewhere.}

The second strand of related work comes from the matching literature
and extends the canonical static matching framework
to a dynamic setting.%
\footnote{The canonical static frameworks underlying our analyses were pioneered by \citet{Koe31,Ege31}, and \citet{Edm65};
see also \citet{Gal62} for matching with ordinal preferences.}
As in the example of steel workers and nurses above, \emph{mismatch} in dynamic environments may occur due to temporal inconsistencies, whereby, a posteriori, better matches were precluded by inferior matches that were formed earlier on.
Therefore, some delay may be optimal from a social planner perspective in order to reduce mismatch.
From a practical viewpoint, the challenge is to identify optimal mechanisms that thicken and clear the market in a way that balances these two objectives.

In this regard, \citet{Akb17}, \citet{Ash19}, \citet{Bac18}, and \citet{Loe16} break new ground in identifying optimal clearing schedules.%
\footnote{These are inspired by some earlier papers on dynamic matching in organ exchange by \citet{Zen02, Unv10}.
See \citet{Akb17} for a discussion.
See also \citet{Blo12}, \citet{Kur14}, and \citet{Les12} who study related queuing models where one side of the market is already present (such as in the housing market).} More precisely,
\citet{Akb17}, in the spirit of an organ exchange application such as the `kidney exchange', identify the optimal mechanism to maximize the number of matches, that is, to minimize the number of agents perishing resulting from failing to get recipients matched with donors in time.
In the model of \citet{Akb17}, agents from both sides of the market arrive and leave stochastically and all carry identical match values, \ie they are of the same type (in the spirit of each life being worth the same).
However, there are two types of agents, as some matches are feasible and others are infeasible, thus rendering some agents easier \textendash\ others harder \textendash\ to match.%
\footnote{This can be modeled by means of a dynamically changing compatibility graph where edges represent feasible matches.}
The optimal mechanism identified by \citet{Akb17} minimizes the number of unmatched patients based on information concerning arrivals and departures, which may involve delaying compatible matches.
Without such information, greedy scheduling is always optimal in this setting. In fact, \citet{Ash19} show that greedy policies are generally optimal, even if information about departure times is available when the above kind of `kidney exchange' markets becomes large.

In a related setting, \citet{Bac18} and \citet{Loe16} introduce binary diversifications of agents and the notion of waiting costs instead of perishing rates as in \citet{Akb17}.
In \citet{Bac18}, agents arrive in donor-recipient pairs and recipients are allowed to decline matches in order to remain in the market. This is motivated by the applications under scrutiny which include, among others, child adoption.
As a result, one of the study's key focuses is on strategic incentives and their role in determining market outcomes.
Their optimal clearing policy is discriminatory, in that it involves matching same-type pairs greedily, and delaying up to some threshold when there are only cross-type pairs in the market.
By contrast, \citet{Loe16} introduce a common discount factor (instead of a constant waiting cost) for both types of agents (as well as for the social planner).\footnote{The discount factor is motivated by the study's focus on financial markets, and would be determined by risk-free rate, beta, and risk premium. Related analyses of financial markets include \citet{Bud15} and \citet{Wah15} (see also \citet{Wah17} for market-making more generally) who consider periodic
clearing of order books to abate certain market phenomena such as volatility that stem from high-frequency trading.} They focus on the analysis of the possibility of efficient trade and rent extraction by the market maker.

In theoretical computer science, the study of related questions dates back at least to the pioneering paper of \citet{Kar90}.%
\footnote{\citet{Kar90} and subsequent work \textendash\ similar to its economic counterparts \textendash\ focus on models with two market sides, where by contrast one side is typically present to begin with and incoming agents from the other side can only match with some of the present agents according to a compatibility graph (see \citet{Meh13} for an overview and \citet{Agg11} for extensions to vertex-weighted matching).}
To the best of our knowledge, \citet{Eme16} were the first in this strand of research to consider the scenario where all agents arrive on the market over time (instead of just one market side).
They present a non-bipartite model
where requests arrive stochastically from one of $n$ different locations to study the performance of different algorithms in terms of worst-case matching and waiting cost.%
\footnote{\citet{Aza17} obtain additional results in terms of upper and lower bounds for the original model. \citet{Eme19} obtain sharper results for a two-location model.
There are also other extensions such as allowing for a stochastic graph \citep{And15,Ash18b}.} 
In the setting of \citet{Eme16}, the specific match costs result from the distance between agents' locations so that, for a patient social planner, it is optimal to wait and only match agents who are at the same location.

Finally, motivated by ride-sharing applications, \citet{Ash17} extend the model of \citet{Eme16} to a bipartite setting where agents independently arrive at different locations.
\citet{Ash17} study the performance of a family of \aclp{CS} along the two axes of waiting vs. mismatch separately, an approach that we extend in order to formulate the induced trade-off between waiting time and the cost of matching.%

\Paragraph{Contributions of the paper}
\label{sec:contribs}

Our paper examines dynamic markets with an infinite type space (in contrast to one or two types), a framework we call the \emph{dynamic clearing game}.
Our point of departure is the static assignment game of \citet{Sha72} to which we add a dynamic layer whereby clients and providers arrive to the market stochastically and independently.
Skills are drawn from a large class of type distributions so that every match is possible but some matches are more costly than others.
At each \acl{ME}, the social planner must decide who to match with whom, and how long to wait before the next \acl{ME}.
As such, the social planner is called to weigh, on the one hand,
mismatches incurred from matching clients and providers suboptimally; and, on the other hand, the
agents' waiting time.
To address this dual issue, we study \aclp{CS} in terms of \emph{when} to match in a fully heterogeneous setting where the social planner has no information regarding the cost of matching couples currently in the market and has no information about the future arrivals of individuals.


For concreteness, we start by studying a micro-level model where
costs of individual matches are distributed according to independent exponential random variables. Whilst our results hold for other distributions too, this model has the advantage of being tractable in closed form. 
In more detail, we first establish a class of optimal \aclp{CS} for two extreme types of single-objective social planners \textendash\ that is, for social planners who only care about minimizing waiting time (in which case greedy is best) or mismatch costs (resulting in endless delay), but not both at the same time.
Second, we show that these two objectives are mutually incompatible, and
multi-objective social planners (who care about both) face a fundamental trade-off.
Specifically, there is no `free lunch', that is \emph{there is no \acl{CS} that is approximately optimal in terms of both waiting time and matching cost.} Remarkably, the greedy \acl{CS} is sub-optimal for every multi-objective social planner.

Building on the no free lunch result, we proceed to fill the spectrum between matching cost and waiting time minimization.
We do so by introducing a class of \aclp{CS} covering a wide range of social planning desiderata between waiting time and matching cost,
and achieving a continuous trade-off between the two.
To explore the finer aspects of this trade-off, we introduce a utility model for the social planner whereby the associated utility of matching cost is of the same order as the agents' utility of waiting time.
Under this model, we show that there exists a non-trivial \acl{CS} achieving this balance, and we show that this schedule is effectively unique (up to asymptotic order considerations).

Finally, we generalize our key findings by studying different decay rates of matching costs (instead of focusing on one decay rate that results from the micro-founded match costs).
We identify two regimes. One, where no free lunch continues to hold. The other, where the benefit from waiting is growing quickly enough, such that a window of opportunity opens and it \emph{is} possible to get a free lunch.
As before, in both regimes, greedy scheduling is generally sub-optimal.

Compared to the existing literature on the trade-off between waiting and mismatch (both in economics and computer science) our model introduces \emph{incomplete information} about the distribution of past and future match costs and considers an infinite type space in a tractable model.
As a consequence, the social planner tries to resolve the trade-off between matching optimally and waiting time in light of incomplete information.
Incomplete information in our setting implies that the social planner must employ \aclp{CS} that do not take as input the relative strengths of current and future matches (since the latter is unknown), thus yielding qualitatively new results.

In contrast to prior results for markets with one or two types of match costs where lack of information resulted in optimality of some form of greedy scheduling, we find that greedy clearing is generally not optimal in the presence of many types. 
Hence, the quick-versus-cheap trade-off is more intricate than previously found.
Moreover, our results may actually also have consequences for applications that have been studied before too (e.g. kidney exchange) if other match value metrics (e.g. potential years of life lost or disability-adjusted life years) are used that would produce more than binary match values.
By studying fully heterogeneous match costs we have to rely on different mathematical tools compared to previous analyses, which were often able to reduce the induced dynamics to discrete Markov processes.

The key technical innovations of our paper concern the concurrent consideration of a continuum of types, independent arrivals, and incomplete information. In turn, these contributions rely on a range of previously unused tools from probability theory and disordered systems to obtain closed-form solutions.
These underlying results are concerned with the expected matching cost for given instances of random, static assignment games.
In particular, in static assignment games with the same number of clients and providers and $\exp(1)$ distributed edge weights, \citet{Ald01} proved the long-standing conjecture that the expected minimum weight matching converges to $\pi^2/6$ (\ie as the number of players is growing). This result was later extended by \citet{Wae05} to assignment games with match costs drawn from non-identical exponential distributions.%
\footnote{To the best of our knowledge, the work of \citet{Wal79} is the first to pose the question, while \citet{Mez87} conjectured the specific limit value.
We also leverage the analyses of \citet{Buc02} and \citet{Lin04} who obtain results for the expected values of finite instances of the latter models, showing \textendash\ as a byproduct \textendash\ that the value is increasing with the number of agents.
For a survey of this literature, we refer the reader to \citet{Kro09}.}
By leveraging the techniques of \citet{Ald01} and \citet{Wae05}, we are able to compute the expected matching cost for every `snapshot in time' of the dynamic clearing game.
This provides strong foundations for our proofs which are then focused on estimating the fluctuations that result from the random arrival of clients and providers and their randomly drawn match costs.
To achieve this, we use several approximation techniques (in particular, the approximation of the arrival process by a continuous-time Wiener process),
which allow us to port over several results from martingale limit theory (such as the law of the iterated logarithm).

\Paragraph{Paper outline.}
The rest of the paper is structured as follows.
In \cref{sec:model}, we introduce the \emph{dynamic clearing game} and the performance measures relevant for our analysis.
In \cref{sec:lunch}, we show that a social planner who cares about both waiting and mismatch faces a fundamental and non-negligible trade-off.
We then go on to analyze a natural selection of \aclp{CS} in \cref{sec:natural}, which cover the whole range of possible trade-offs.
In \cref{sec:planner}, we commit to a specific utility function that specifies how the social planner values waiting time versus matching cost and find the unique optimal \acl{CS}.
\cref{sec:generalizations} generalizes the analysis and shows that there are two regimes, one where `free lunch' is not achievable and one where it is achievable.
Finally, in 
\cref{sec:discussion}, we discuss practical implications and of avenues for future research.

\section{The model}
\label{sec:model}

In this section, we introduce the model, which we shall refer to as the \emph{dynamic clearing game}.

\Paragraph{The dynamic clearing game.}
Consider the following model of a dynamic two-sided market evolving in continuous time $\tau\in[0,\infty)$.
At each tick of a Poisson clock with rate $1$ an agent enters the market;
this agent could be either a \emph{client} or a \emph{provider}, with equal probability.%
\footnote{We are using here the terms `client' and `provider' in a generic sense, just to illustrate the difference between the two sides of the market.
As we explain below, what is important from a modeling perspective is that `clients' are to be matched to `providers' (as in our running example of job vacancies and job seekers).}
To keep track of the number of agents in both sides of the market, let $\boys(\tau)$ and $\girls(\tau)$ denote the set of clients and providers that have entered the market by time $\tau$ (and possibly already left again), and let $\nBoys(\tau) = \abs{\boys(\tau)}$ and $\nGirls(\tau) = \abs{\girls(\tau)}$ be the respective numbers thereof.
Then, the number of agents on the \emph{short side of the market} will be written $\nShort(\tau) = \min\{\nBoys(\tau),\nGirls(\tau)\}$.%
\footnote{In a slight (but convenient) abuse of notation, we will sometimes write $\nBoys(t)$, $\nGirls(t)$, and $\nShort(t)$ to denote respectively the number of clients, providers, and agents at the short side of the market when the $t$-th agent enters the market \textendash\ specifically, letting $\tau(t)$ denote the time at which the $t$-th agent enters the market, we will write $\nBoys(t) \equiv \nBoys(\tau(t))$, etc.}

As in the static assignment model of \citet{Sha72} on which we build, we consider a one-to-one matching market where each client is to be \emph{matched} to at most one provider and vice versa;
then, once a couple is matched, both agents leave the market.
For example, in labor market language, each job seeker gets at most one job, and each vacancy concerns exactly one worker;
once a match has been made, the governmental job bureau removes the matched pair from its ledger, and the process continues.

For concreteness, we shall next define a specific family of match cost distributions. 
This is done solely to streamline our presentation: Our methodology allows us to be more general as we discuss in Section \ref{sec:generalizations}.
Suppose that the quality of a (candidate) pair is characterized by an inherent
\emph{match parameter} $\rate_{\boy\girl}$ where a higher parameter will represent a lower expected match cost. Match costs are independently and exponentially distributed with rate $\rate_{\boy\girl}$.\footnote{As mentioned by \citet{Ald01} and developed in detail by \citet[Section 2]{Jan99} generalizations to larger classes of distributions are easily obtained. For ease of exposition we stick to exponential distributions with the exception of Section \ref{sec:generalizations} that generalizes our main results.}

Specifically, we posit that the \emph{match cost} $\match_{\boy\girl} > 0$ when client $\boy\in\boys$ is matched to provider $\girl\in\girls$ is an independent draw from
 an exponential distribution of rate $\rate_{\boy\girl}$ for any time $\tau$, that is,  $\match_{\boy\girl} \sim \exp(\rate_{\boy\girl})$.
For example, a popular model assumes that $\rate_{\boy\girl}$ is composed by additively separable components describing the agents' types and a couple-specific term depending possibly on both the identity of the agents and their types \citep{Kan18}.
For generality, our only assumption regarding the rate parameters $\rate_{\boy\girl}$ is that they are bounded from below by $\underline\rate$, from above by $\overline\rate$, and have mean value $\rate$, that is, $\rate = \lim_{\tau\to\infty} \bracks{\nBoys(\tau) + \nGirls(\tau)}^{-1} \sum_{\boy=1}^{\nBoys(\tau)} \sum_{\girl=1}^{\nGirls(\tau)} \rate_{\boy\girl}$.

\Paragraph{The social planner.}
Throughout the sequel, we  assume the existence of a social planner who, whenever an agent arrives on the market, observes the arrival;
other than that, the social planner has no other information regarding the arrival process of the agents (or the distribution of their match costs).
Due to this lack of information, the social planner has no basis to judge whether a particular agent arriving in the market is `good' or `bad', and is thus left with the challenge of choosing a \acl{CS} with which to operate the market.
In the sequel, we will also write $A\equiv A(\tau)$ for the number of clients/providers that have been assigned a partner up to time $\tau$, and $R(\tau) = \nBoys(\tau) + \nGirls(\tau) - 2A(\tau)$ for the number of unmatched agents up to time $\tau$.

With all this in hand, a \acdef{CS} will be a rule that determines:
\begin{enumerate}[(i)]
\addtolength{\itemsep}{.5ex}
\item
At which points in time $\tau\in(0,\infty)$ to trigger a \acdef{ME}, possibly depending on $\nBoys(\tau)$, $\nGirls(\tau)$ and $A(\tau)$.
\item
Which players to match at a given \acl{ME}, possibly depending on the current match costs of agents who have already arrived to the market until time $\tau$.
\end{enumerate}
After a \acl{ME}, the players who are being matched leave the market, while the unmatched players remain on the market.

In what follows, we shall focus on \aclp{CS} that match a single couple per matching event. In particular, the \aclp{CS} we analyze will match the couple with the minimal matching cost in each matching event.\footnote{The only \acl{CS} that we consider and which violates this principle is the \acf{FCFS} \acl{CS} which we describe in \cref{sec:lunch}.} Restricting ourselves to these kinds of \aclp{CS} is motivated by our aim to study \aclp{CS} that can be paired with a market mechanism (\eg a two-sided auction). 
\cite{Mye83} and \cite{Rus94} show the general impossibility to have ex-post efficient and budget balanced mechanisms for two-sided market games with private information. Their results  rely on the assumption that, with positive probability, any given client-provider pair have valuations for each other such that trade is not individually rational for both at
any price. Since the latter doesn't hold in our setting, one could micro-found the interaction avoiding the impossibility, that is, define a mechanism that is ex-post efficient and budget balanced.
We shall assume throughout the analysis the existence of such a mechanism;
however, given our focus on the social planner an explicit analysis of such mechanisms is beyond the scope of the present paper.

Now, as discussed before, the social planner aims to match clients and providers optimally along two axes:
\begin{inparaenum}
[\itshape a\upshape)]
\item
to reduce the coexistence of clients and providers (\ie \emph{waiting time});
and
\item
to match clients and providers in a way that minimizes matching cost (\ie \emph{mismatch}).
\end{inparaenum}
Beginning with the latter, 
the \emph{expected matching cost} for the first $A$ couples is defined as
\begin{equation}
\cost_\CS(A)
	\equiv \exof*{\sum_{k=1}^A \match_{\boy_{k},\girl_{k}}}
\end{equation}
where $\match_{\boy_{k},\girl_{k}}$ is the match cost of the $k$-th matched couple and the expectation is taken with respect to the random arrival of clients and providers and the randomness of the match costs.
Similarly, the \emph{expected waiting time} of a \acl{CS} until time $\horizon$ is defined as
\begin{equation}
\wait_\CS(\horizon)
	\equiv \exof*{\int_0^{\horizon} R(\tau) \dd\tau}
\end{equation}
where the expectation is taken with respect to the random arrival of agents in the market.
In the sections that follow, we will explore the optimization of these two performance metrics, and the trade-offs that arise when trying to minimize both.

\section{The trade-off between waiting time and matching costs}
\label{sec:lunch}

Our analysis begins with the case of a single-minded social planner.
Specifically, we investigate which \acl{CS} a social planner would employ if either only caring about the expected waiting time, or only caring about the expected matching cost.
After we deal with these two cases separately, we shall proceed to show that these objectives are mutually incompatible and lead to an unavoidable trade-off for the social planner.

\Paragraph{Single-minded social planners.}
First, a social planner who is optimizing the agents' expected waiting time will choose a \acl{CS} which leaves no unmatched couples at any point in time.
To do so, we will consider a `greedy' \acl{CS}, denoted $\CS_{\greedy}$, which performs a minimum weight matching whenever there is exactly one unmatched client/provider pair in the market.
Second, a social planner who is optimizing the agents' expected matching cost will choose a \acl{CS} which \textendash\ ideally \textendash\ waits until everyone has arrived in the market and then matches agents optimally (thus minimizing the sum of match costs).%
\footnote{To make such a \acl{CS} realistic, all agents would need to arrive in the market in finite time;
since this schedule will mostly serve as a theoretical comparison baseline, we will not consider this issue in detail.}
That is, the hypothetical `patient' \acl{CS}, denoted $\CS_{\patient}$, should be preferred by any social planner who is only concerned with the expected matching cost.

The implementation of these schedules leads to the following matching cost and waiting time:

\begin{proposition}
\label{prop:optimalbaseline}
The optimal \aclp{CS} for a single-objective social planner are:
\begin{enumerate}
[\noindent \upshape(1)]
\item
The patient \acl{CS} $\CS_{\patient}$
is optimal with respect to matching cost minimization;
in particular, for all $A\geq 1$, we have:
\[
\frac{\log 2}{\overline\rate}
	\leq \cost_{\patient}(A)
	\leq \frac{\pi^{2}}{6 \underline\rate}
\]
\item
The greedy \acl{CS} $\CS_{\greedy}$
is optimal with respect to waiting time minimization;
in particular, for all $\tau\geq 0$, we have:
\[
\wait_{\greedy}(\tau)
	= \frac{2}{3} \tau^{3/2}
\]
\end{enumerate}
\end{proposition}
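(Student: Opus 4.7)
The plan is to split the proposition into two independent claims with different toolkits: the random assignment literature for the patient schedule, and a Poisson random-walk analysis for the greedy schedule. The two parts share no structural link, so a clean separation is both natural and efficient.

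For the patient schedule (claim~1), I would first observe that a planner who never matches prematurely realises, at the moment of clearing, a static assignment game on a roughly balanced bipartite market, whose expected minimum matching cost is controlled by classical random-assignment results. Specifically, the Aldous--Wästlund theorem states that for $K_{n,n}$ with i.i.d.\ $\exp(1)$ edge weights the expected minimum matching cost equals $\sum_{k=1}^{n}1/k^{2}$ and converges to $\pi^{2}/6$ as $n\to\infty$. Since $\underline{\rate}\leq\rate_{\boy\girl}\leq\overline{\rate}$ by assumption, each match cost $\match_{\boy\girl}$ is stochastically sandwiched between an $\exp(\overline{\rate})$ and an $\exp(\underline{\rate})$ variable, and the minimum-matching functional is monotone in this stochastic order, giving
\[
\sum_{k=1}^{A}\frac{1}{k^{2}\overline{\rate}} \;\leq\; \cost_{\patient}(A) \;\leq\; \sum_{k=1}^{A}\frac{1}{k^{2}\underline{\rate}} \;\leq\; \frac{\pi^{2}}{6\underline{\rate}}.
\]
Since $\sum_{k=1}^{A}1/k^{2}\geq 1>\log 2$ for every $A\geq 1$, the stated lower bound drops out immediately. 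Optimality of $\CS_{\patient}$ among cost-minimising schedules is a monotonicity observation: any premature matching can only shrink the set of available edges seen by the eventual optimiser, so the expected cost of the first $A$ matches is minimised in the limit of unlimited waiting.

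For the greedy schedule (claim~2), I would first establish optimality by a sample-path comparison: at every arrival $R(\tau)$ can be decremented by at most two agents, and since $\CS_{\greedy}$ clears whenever possible, any alternative schedule satisfies $R(\tau)\geq R_{\greedy}(\tau)$ pathwise, hence $\wait_{\greedy}(\tau)\leq \wait_{\CS}(\tau)$ after integrating and taking expectations. For the explicit formula, note that under $\CS_{\greedy}$ one has $A(\tau)=\min\{\nBoys(\tau),\nGirls(\tau)\}$ and therefore $R(\tau)=\abs{S(\tau)}$ where $S(\tau)=\nBoys(\tau)-\nGirls(\tau)$ is the difference of two independent Poisson processes of rate $1/2$, i.e.\ a compensated symmetric random walk with $\Var S(\tau)=\tau$. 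A direct Skellam calculation (or the Brownian approximation $S(\tau)\approx B_{\tau}$ via the functional CLT) shows $\ex[\abs{S(\tau)}]$ is of order $\sqrt{\tau}$, and integration produces the advertised $\tau^{3/2}$ growth with the asserted prefactor.

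The main technical obstacle is to reconcile the static random-assignment identities, which are stated conditionally on the market sizes $\nBoys(\tau)$ and $\nGirls(\tau)$, with the randomness of the underlying arrival process: the bounds must be uniform in those sizes, and particular care is needed so that the lower bound in claim~1 does not degrade at small short-side sizes. This is ultimately a monotonicity statement about $\sum 1/k^{2}$ in $A$, but it is the place where the sandwich argument is tightest.
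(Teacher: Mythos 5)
Your Part 2 is essentially the paper's argument: under $\CS_{\greedy}$ one has $R(\tau)=\abs{S_\tau}$ for the difference $S_\tau=\nBoys(\tau)-\nGirls(\tau)$, then Tonelli plus the Wiener/Skellam approximation and integration give the $\tau^{3/2}$ law; the pathwise optimality remark is a fine (slightly more explicit) version of what the paper asserts. The genuine gap is in your lower bound for Part 1. You sandwich the patient cost between balanced $K_{A,A}$ assignment values and claim $\cost_{\patient}(A)\geq\sum_{k=1}^{A}\frac{1}{k^{2}\overline\rate}\geq\frac{1}{\overline\rate}$. That intermediate inequality is false: at the time the patient planner clears, the market is generically \emph{unbalanced} \textendash\ the long side strictly exceeds the number $A$ of matched couples \textendash\ and extra agents on the long side strictly lower the expected minimum cost of an $A$-matching (in the Buck\textendash Linusson\textendash W\"astl\"und formula $\sum_{i+j<A}\frac{1}{(\nBoys-i)(\nGirls-j)}$ every term decreases as $\nBoys$ or $\nGirls$ grows, which is exactly why it only yields an \emph{upper} bound by the balanced case). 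Indeed, the paper's own $A=1$ computation contradicts your bound: when the first couple can be formed the long side has $k$ agents with probability $2^{-k}$, the cheapest of $k$ available matches has expected cost at least $\frac{1}{\overline\rate k}$ (minimum of exponentials), and averaging gives $\sum_{k\geq1}\frac{1}{\overline\rate k 2^{k}}=\frac{\log 2}{\overline\rate}<\frac{1}{\overline\rate}$. So your sandwich would "prove" something strictly stronger than the stated bound, and it fails precisely because of the long-side effect \textendash\ not because of any delicate monotonicity of $\sum 1/k^{2}$, which is where your closing paragraph locates the difficulty.

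The correct route (the paper's) for the lower bound is: since match costs are positive, $\cost_{\patient}(A)\geq\cost_{\patient}(1)$, and then compute the expected cheapest available match at the first feasible clearing time by conditioning on the geometric law of the long-side size, which yields exactly $\log 2/\overline\rate$. Your upper bound is recoverable as stated, but "roughly balanced" is not the right justification; what you need is the monotonicity just described (an unbalanced market only lowers the cost relative to $K_{A,A}$), after which the stochastic-dominance rescaling by $\underline\rate$ and $\sum_{k\le A}1/k^{2}\le\pi^{2}/6$ give $\cost_{\patient}(A)\leq\pi^{2}/(6\underline\rate)$ as in the paper.
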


\begin{remark*}
In view of \cref{prop:optimalbaseline}, the expected matching cost of $\CS_{\patient}$ and the expected waiting time of $\CS_{\greedy}$ will serve as the benchmark for comparing the matching cost and waiting time of any other \acl{CS}.
\end{remark*}

\begin{proof}[Proof of \cref{prop:optimalbaseline}.]
We prove our claims for each of the two \aclp{CS} separately.

\begin{proofpart}{Matching cost minimization}
For the first assertion, note that the exponential distribution is closed under scaling by a positive factor, \ie if $X\sim \exp(\kappa)$ then $\mu X\sim \exp(\kappa/\mu)$.
In our case, this implies that
\begin{equation}
\match_{\boy\girl}
	\sim \exp(\rate_{\boy\girl})
	\iff
\match_{\boy\girl}
	\sim \frac{1}{\rate_{\boy\girl}}\exp(1)
\end{equation}
We have that for all $\boy\in\boys$, $\girl\in\girls$, the distribution of $\match_{\boy\girl}$ is first-order stochastically dominated by $\underline\rate^{-1} \exp(1)$.
Thus the expected weight is upper bounded by the simplified problem where all match costs are distributed according to $\underline\rate^{-1} \exp(1)$.
With this in mind, we will simplify notation in the rest of the proof by setting $\underline\rate = 1$.

By the summation formula of \citet{Buc02} and \citet{Lin04}, we have for the expected weight of the minimum $A$-matching (note that $A=\nShort$, recalling that $\nShort = \min\{\nBoys,\nGirls\}$):
\begin{equation}
\mathbb E_{\min}\left [\sum_{k=1}^{\nShort} \match_{\boy_{k},\girl_{k}}\right ]
	= \sum_{\substack{\boy,\girl\geq 0\\ \boy + \girl <\nShort}} \frac{1}{(\nBoys - \boy) \cdot (\nGirls - \girl)}.
\end{equation}
Thus, we readily have
\begin{equation}
\cost_{\patient}(\nShort)
	= \mathbb E_{\min} \left[\sum_{k=1}^{\nShort} \match_{\boy_{k},\girl_{k}}\right]
	\leq \sum_{\substack{\boy,\girl\geq 0\\ \boy + \girl <\nShort}} \frac{1}{(\nShort-\boy)\cdot(\nShort-\girl)}.
\end{equation}
To proceed, by \citet[Lemma 3.1]{Wae09} we have
\begin{equation}
\sum_{\substack{\boy,\girl\geq 0\\ \boy + \girl <\nShort}} \frac{1}{(\nShort-\boy)\cdot(\nShort-\girl)}
 	= \sum_{k=1}^{\nShort} \frac{1}{k^{2}}
	\leq \zeta(2),
\end{equation}
where $\zeta(2) = \sum_{n=1}^{\infty} 1/n^{2} = \pi^{2}/6$ is the Basel constant.
Returning to our original problem, we conclude that $\cost_{\patient}(A)\leq \pi^{2}/(6\underline\rate)$, as claimed.

To compute the lower bound, we need to consider the expected match cost for $A=1$, because matching more players would only serve to increase the expected matching cost.
In the patient \acl{CS} for $A=1$, the process terminates when at least one client and at least one provider have entered the market.
Let $Y\geq 2$ be the number of agents required to observe at least one client and one provider.
Then the event $Y=k+1$ is the same event as the union of the disjoint events `the first $k$ agents are clients and the $(k+1)$-th agent is a provider' and `the first $k$ agents are providers and the $(k+1)$-th agent is a client'.
Each of the latter events has probability $1/2^{k+1}$, so $\probof{Y=k+1} = 2^{-k}$.
Moreover, as we prove in \cref{lemma_minofexpvariables},
for $Y=k+1$, the expected minimum match cost is given by $\frac{1}{\overline\rate\cdot k}$.
Thus for $A=1$, we get
\begin{equation}
\cost_{\patient}(N)=\sum_{k=1}^\infty \frac{1}{\overline\rate k} \probof{Y=k+1}
	= \frac{1}{\overline\rate} \sum_{k=1}^\infty \frac{1}{2^{k} k}
	=\frac{\log2}{\overline\rate},
\end{equation}
where the last equality follows from the series expansion $\log(1-x) = -x - x^{2}/2 - x^{3}/3 - \dotsm$ applied to $x=1/2$.
\end{proofpart}

\begin{proofpart}{Waiting time minimization}
For our second assertion, note that, at any point in time, there are either no clients or no providers in the market.
In view of this, let $S_\tau$ be the difference of clients and providers who have arrived to the market until time $\tau$, that is, $S_\tau=\nBoys(\tau) - \nGirls(\tau)$.
Then, for all $\horizon > 0$, we get:
\begin{equation}
\wait_{\greedy}(\horizon)=\exof*{\int_0^{\horizon} \abs{S_\tau} \dd\tau}
	= \int_0^{\horizon} \exof{\abs{S_\tau}} \dd\tau
\end{equation}
where the latter equality holds by Tonelli's theorem (since $\abs{S_\tau}$ is non-negative).

Applying Tonelli's theorem a second time, we can consider the case where the expectation with respect to the arrival times is taken first.
To do so, consider the process where at the fixed points in time $\tau=1,2,\dotsc$ an agent arrives to the market and let $\bar S_\tau$ be the difference of clients and providers who have arrived to the market at time $\tau$.
We then have:
\begin{equation}
\label{WienerWiener}
\exof*{\int_0^{\horizon} \abs{S_\tau} \dd\tau}
	= \int_0^{\horizon} \exof{\abs{S_\tau}} \dd\tau
	= \int_0^{\horizon} \exof{\abs{\bar S_\tau}} \dd\tau
\end{equation}
It is well-known that for $\tau\to\infty$ the appropriately rescaled random walk $\bar S_\tau$ converges in distribution to the Wiener process $W_\tau$ \citep{Kac47}.
Thus, for large $\horizon$, \cref{WienerWiener} gives
\begin{equation*}
\exof*{\int_0^{\horizon} |S_\tau|d\tau}
	= \int_0^{\horizon} \exof{\abs{W_\tau}} \dd\tau
	= \int_0^{\horizon} \sqrt{\Var(W_\tau)} \dd\tau
	= \int_0^{\horizon} \sqrt{\tau} \dd\tau
	= \frac{2}{3}\horizon^{3/2}.
	\qedhere
\end{equation*}
\end{proofpart}
\end{proof}	
This concludes the analysis of a single-minded social planner who either only cares about the expected waiting time, or onlythe expected matching cost.

\Paragraph{Multi-objective social planners.}
Going beyond the narrow view of a single-minded social planner, we proceed below to examine the case of social planners that care about \emph{both} the expected cost of matching and the agents' overall expected waiting time.
Natural candidates to evaluate a \acl{CS} in this context are the expected matching ratio and the expected waiting ratio,
defined below as follows:

\begin{enumerate}
\item
The \emph{expected matching ratio} of a \acl{CS} $\CS$ is
\begin{flalign} \label{eq:matchingratio}
\alpha
	&\equiv \alpha(A)
	= \frac{\cost_{\CS}(A)}{\cost_{\patient}(A)}.
\intertext{\item  
The \emph{expected waiting ratio} of a \acl{CS} $\CS$ is} \label{eq:waitingratio}
\beta
	&\equiv \beta(\tau)
	= \frac{\wait_{\CS}(\tau)}{\wait_{\greedy}(\tau)}.
\end{flalign}
\end{enumerate}
Note that  $\CS_{\patient}$ is optimal with respect to expected matching cost while $\CS_{\greedy}$ is optimal with respect to expected waiting time.

Going forward, note that all candidate \aclp{CS} can be characterized by a function $f\from \mathbb{R}^{+} \to \mathbb{R}^{+}$ such that the $k$-th ($k\in \mathbb N$) couple is matched
when $\ceil{f(k)}$ agents are on the short side of the market. Denote a \acl{CS} that is defined via a function $f$ by $\CS_f$. 
Without any restrictions on $f$ this includes all possible \aclp{CS} that always match the couple with the minimal match cost. 
However, given that we wish to analyze the asymptotic regime where enough agents have entered the market, we focus below on a natural class of functions introduced by \citet{Har10old} which make such comparisons possible. Specifically, each function in this class  is  defined, for all $x\geq0$, by a finite combination of the basic arithmetic operations (addition, multiplication, raising to a power, and their inverses), operating on the variable $x$ and on real constants.
 \citet[Theorem, page 18]{Har10old} shows that for any two such functions, $f$ and $g$, either $f=\omega (g), f=\Theta (g)$, or $f=o(g)$.

Then, for such functions, we will use the following asymptotic notations:\\ $f(x) = \bigoh(g(x))$ if $f(x)<c\cdot g(x)$ for some $c>0$ constant and $x$ sufficiently large.  $f(x)=\Omega(g(x))$ is the inverse $O$ notation ($f(x)>c\cdot g(x)$ for $x$ sufficiently large).    $f(x)=\Theta(g(x))$ if  there exist two constants $k,K\geq 0$ and a positive integer $x_0$ such that $kg(x)\leq f(x)\leq Kg(x)$ for all $x\geq x_0$. \\ For $g(x)$ non-zero $f(x)=o(g(x))$ if $\lim_{x\to\infty}\frac{f(x)}{g(x)}=0$ and $f(x)=\omega(g(x))$ if $\lim_{x\to\infty}\frac{f(x)}{g(x)}=\infty$.


In light of the above, the first question that arises is whether there exists a \acl{CS} that is optimal along \emph{both} axes (at least, asymptotically).
To formalize this, we say that a \acl{CS} $\CS$ has \emph{finite expected matching ratio} if $\limsup_{A(\tau)>0} \alpha(A(\tau)) < \infty$;
likewise, we say that \acl{CS} has \emph{finite expected waiting ratio} if $\limsup_{\tau>0} \beta(\tau) < \infty$.

The following theorem shows that the answer to the above question is a resounding `no':
\begin{theorem}[`No free lunch']
\label{thm:freelunch}
There exists no \acl{CS} simultaneously achieving finite ratios for both expected matching cost and waiting time.
\end{theorem}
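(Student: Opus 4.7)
The plan is to argue by contradiction: assume that some $\CS_f$ achieves both $\limsup_{A>0}\alpha(A)<\infty$ and $\limsup_{\tau>0}\beta(\tau)<\infty$, and show that the two bounds are mutually incompatible. I would parametrize $\CS_f$ so that the $k$-th matching event fires when the short side of the market contains $n_k:=\lceil f(k)\rceil$ unmatched agents, with the min-cost couple being chosen at that moment; let $\tau_k$ denote the (random) time of the $k$-th event, $m_k$ the long-side count at that moment, and $R_k:=n_k+m_k=R(\tau_k^-)$ the total unmatched pool just before the event. The intuitive tension the proof will exploit is this: keeping waiting small forces $R_k$ to grow no faster than $\sqrt k$, but then the $k$-th min-cost match is too expensive to keep the cumulative matching cost bounded.

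The first step is a pointwise lower bound on the cost of the $k$-th match. Since each candidate edge cost is $\exp(\rate_{\boy\girl})$-distributed with $\rate_{\boy\girl}\leq\overline\rate$, it is stochastically larger than $\exp(\overline\rate)$; edges that survived earlier events are only larger still, as they were conditioned on exceeding the previous minimum. Hence the minimum of the $n_k m_k$ candidate costs stochastically dominates the minimum of $n_k m_k$ i.i.d.\ $\exp(\overline\rate)$ variables, giving $\exof{\match_{\boy_k,\girl_k}\given n_k,m_k}\geq 1/(\overline\rate\, n_k m_k)$. Using $n_k m_k\leq R_k^{2}/4$ (AM--GM) and Jensen's inequality summed over the first $A$ matches yields
\[
\cost_{\CS_f}(A) \;\geq\; \frac{4}{\overline\rate}\sum_{k=1}^A \frac{1}{\exof{R_k^{2}}}.
\]

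The second step converts the finite waiting ratio into an upper bound on $\exof{R_k^{2}}$. Since arrivals form a Poisson process of rate $1$, $\exof{\nBoys(\tau)+\nGirls(\tau)}=\tau$ and $R(\tau)=\nBoys(\tau)+\nGirls(\tau)-2A(\tau)$. A finite $\beta$ forces $\int_0^\tau\exof{R(s)}\dd s=O(\tau^{3/2})$, so $\exof{A(\tau)}=\tau/2-O(\sqrt\tau)$ and $\exof{\tau_k}=2k+O(\sqrt k)$. Decomposing $m_k=n_k+|S_{\tau_k}|$ through the imbalance random walk $S_\tau=\nBoys(\tau)-\nGirls(\tau)$ (already known to scale as $\sqrt\tau$ by the proof of \cref{prop:optimalbaseline}) and using the optional-stopping identity $\exof{S_{\tau_k}^{2}}=\exof{\tau_k}=\Theta(k)$, one obtains $\exof{R_k^{2}}=4n_k^{2}+O(n_k\sqrt k)+O(k)$. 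If $n_k=O(\sqrt k)$ on a positive-density subsequence $\mathcal G\subseteq\{1,\dots,A\}$, then $\exof{R_k^{2}}=O(k)$ on $\mathcal G$ and the previous display yields
\[
\cost_{\CS_f}(A) \;\geq\; c\sum_{k\in\mathcal G,\ k\leq A}\frac{1}{k} \;=\; \Omega(\log A),
\]
contradicting $\cost_{\CS_f}(A)=O(1)$ (which follows from finite $\alpha$ combined with $\cost_\patient(A)=O(1)$ via \cref{prop:optimalbaseline}).

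The hard part will be this density-extraction step, since the waiting bound is only \emph{integrated} and a priori $\exof{R(s)}$ is allowed to spike above $\sqrt s$ on sparse time sets. Concretely, I would use $R(s)\geq R_k-2$ on $[\tau_k,\tau_{k+1}]$, together with $\sum_k(\tau_{k+1}-\tau_k)\sim 2A$ (again forced by finite $\beta$), to deduce $\sum_k R_k(\tau_{k+1}-\tau_k)=O(A^{3/2})$. A weighted Chebyshev argument, combined with the Hardy-class dichotomy on $f$ noted before the theorem statement and with law-of-iterated-logarithm estimates for $S_\tau$, should then isolate a positive-density index set $\mathcal G$ on which $R_k\leq K\sqrt k$ uniformly. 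Once this is in hand, the harmonic divergence $\sum_{k\in\mathcal G}1/k=\Omega(\log|\mathcal G|)$ closes the contradiction and establishes the no-free-lunch statement.
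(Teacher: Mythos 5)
Your proposal is correct in substance and reaches the paper's conclusion through the contrapositive run in the opposite direction. The paper argues: finite matching ratio forces $f(k)=\omega(k^{1/2})$ (delegating the quantitative content to the bound $\alpha_{\gamma=1/2}=\Theta(\log A)$ from \cref{THM:COST}, used as a black box), and then any such $f$ yields waiting $\omega(\tau^{3/2})$ via the alternating-arrival/Wiener lower bound. You instead start from the finite waiting ratio, extract the constraint $n_k=O(\sqrt k)$ (on a dense index set), and re-derive the $\Omega(\log A)$ cost blow-up from first principles via $\exof{\match_{\boy_k,\girl_k}}\gtrsim 1/(\overline\rate\,n_k m_k)$, AM--GM, Jensen, and the $\exof{S_{\tau_k}^2}=\Theta(k)$ imbalance estimate. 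Both proofs pivot on the same two facts \textendash\ the expected minimum match cost scales as the reciprocal of the number of candidate edges, and the short side cannot stay below $\Theta(\sqrt\tau)$ without the imbalance walk dominating \textendash\ so the mechanism is identical; what your version buys is self-containedness (no appeal to \cref{THM:COST}) and a sharper articulation of the threshold, at the price of the density-extraction machinery.

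Two cautions on that last step. First, the extraction must be performed on the deterministic thresholds $n_k=\ceil{f(k)}$ rather than on the random totals $R_k$, exactly as in your $\exof{R_k^2}=4n_k^2+O(n_k\sqrt k)+O(k)$ display; your closing paragraph drifts into isolating a set where ``$R_k\leq K\sqrt k$ uniformly,'' which is the wrong object (a bound on $\exof{R_k}$ does not control $\exof{R_k^2}$ in general, whereas the decomposition through deterministic $n_k$ does). Second, a global Chebyshev bound from $\sum_{k\leq A}n_k=O(A^{3/2})$ only yields $n_k=O(\sqrt A)$ on a dense set, which gives a constant rather than $\Omega(\log A)$; you need to apply the argument dyadically on blocks $[2^j,2^{j+1}]$ to get $n_k=O(\sqrt k)$ with positive density in each block and hence $\Omega(1)$ harmonic mass per block. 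Note also that under the paper's restriction of $f$ to Hardy's logarithmico-exponential class, the dichotomy $f=O(\sqrt k)$ versus $f=\omega(\sqrt k)$ holds outright and the density extraction becomes unnecessary; your heavier argument is only needed if one insists on covering oscillating schedules outside that class.
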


\cref{thm:freelunch} illustrates that multi-objective social planners are faced with a crucial trade-off independently of their specific utility function \textendash\ provided of course that they care about both matching cost and waiting time in a non-trivial way.
In addition, \cref{thm:freelunch} justifies the performance measures for the two dimensions of mismatch (expected matching ratio and waiting ratio) and in particular the sufficiency to analyze them in terms of orders of $\tau$ (or $A=A(\tau)$).

\begin{proof}[Proof of \cref{thm:freelunch}.] We prove this result by contradiction;
specifically, we find a necessary condition for a \acl{CS} to have finite expected matching ratio and then show that \aclp{CS} satisfying this condition cannot have a finite expected waiting ratio.

To make this precise, consider the \acl{CS} $\CS_{f}$ that matches the $k$-th couple when $\nShort - (k-1) = \ceil{f(k)}$.  
We shall show that a necessary condition for a \acl{CS} $\CS_{f}$ to have finite expected matching ratio is 
\begin{equation}
\label{eqlimitsmall}
f(k)
	= \omega(k^{1/2})
\end{equation}
To show this, consider the \acl{CS} that matches the $k$-th couple when at least $\ceil{k^{1/2}}$ players are on the short side of the market;
with a fair degree of hindsight, denote this \acl{CS} as $\CS_{\gamma=1/2}$.%
\footnote{See \cref{sec:natural} for detailed definitions.}
As we show in \cref{THM:COST}, this \acl{CS} has $\alpha(A) = \Theta (\log A)$.
Thus, in order for another schedule $\CS_{f}$ to have finite expected matching ratio,
\aclp{ME} have to happen orders of magnitude later than in $\CS_{\gamma=1/2}$.
Concretely, for $k$ large enough the $k$-th couple is cleared at time $\tau_{\CS_{f}}(k)=\omega (\tau_{\CS_{\gamma=1/2}}(k))$.
It follows that \cref{eqlimitsmall} holds.
 
We can now analyze the expected waiting time for $\CS_{f}$ such that \cref{eqlimitsmall} holds for $f$. 
To construct a lower bound, consider the alternative arrival process, where clients and providers alternatingly arrive to the market. Note that for any given \acl{CS} this process incurs lower waiting time. For the \acl{CS} we consider the waiting time of this alternative arrival process is precisely governed by the fact that the $k$-th match takes place when at least $f(k)$ players are on the short side of the market.
Further, note that $\tau (A)$ is clearly smaller for this new arrival process compared to the original process. Given that we only need to show that the waiting time is increasing in $A$ it suffices to show that it is increasing in $\tau$ (not conditioned on $A$). Thus, the waiting time is lower bounded by using the approximation by the Wiener process (as in the proof of \cref{prop:optimalbaseline}) and by observing that arrival is governed by a Poisson clock of rate 1:
\begin{equation}
\int_0^{\horizon} 2f(\tau) d\tau
	= \omega\parens*{ \int_0^{\horizon} 2\sqrt \tau \dd\tau }
	= \omega (\horizon^{3/2})
\end{equation}
By \cref{prop:optimalbaseline}(ii), the optimal expected waiting time is $(2/3)\,\horizon^{3/2}$, so we conclude that the expected waiting ratio is lower bounded by $\omega(\horizon^{3/2}) / \horizon^{3/2} = \omega(1)$ and our proof is complete.
\end{proof}

This concludes our first result for multi-objective social planners, showing that the trade-off between cost of matching and waiting time is essential.

\section{Interpolating between waiting time and matching cost}
\label{sec:natural}

In this section, we analyze a class of \aclp{CS} covering a broad spectrum of social planning desiderata interpolating between matching cost and waiting time. 

To begin, recall that \cref{prop:optimalbaseline} provides the expected matching cost of the patient \acl{CS} $\CS_{\patient}$ (which minimizes mismatches) and the expected waiting time of the greedy \acl{CS} $\CS_{\greedy}$ (which minimizes waiting times).
Interpolating between these two `extreme' schedules, we shall consider below a class of \aclp{CS} where the social planner waits for some length of time in order to accrue some intermediate number of agents on both sides of the market.
Concretely, we shall study \aclp{CS} that match the $k$-th couple when $\nShort - k = f(k)$, \ie when $f(k)$ players are on the short side of the market.\footnote{Recall that $\nShort = \min\{\nBoys,\nGirls\}$.}

For concreteness, we restrict ourselves to \aclp{CS} of the form 
\begin{flalign}
f(k)
	= \Theta (k^\gamma)
	\quad
	\text{for some $\gamma\in[0,1]$}.
\end{flalign}
\label{asfdasdfasdf}
For $\gamma = 0$, the induced clearing schedules match players once a constant threshold is reached;
in particular, the greedy schedule is recovered when $f\equiv 1$ (corresponding to $\gamma=0$).
More generally, we shall denote clearing schedules of the above form by $\CS_\gamma$ and write $\CS_{\gamma=1/2}$ for the clearing schedule with $\gamma=1/2$.
Similarly we shall use the notation
$\alpha_\gamma$ for the expected matching ratio of $\CS_\gamma$ and $\beta_\gamma$ for the expected waiting ratio of $\CS_\gamma$.

In addition to the clearing schedules induced by the assumptions above, we shall also consider another natural schedule based on the principle of \acdef{FCFS}, \ie when agents are matched as soon as possible on a \acl{FCFS} basis. 
This schedule, which we denote by $\CS_{\FIFO}$, differs from $\CS_{\greedy}$ in terms of who is matched with whom (\acl{FCFS} vs. minimum cost matching) but not regarding when a matching event occurs. 
As such, given that $\CS_{\FIFO}$ does not take into account matching costs, it is not reasonable to expect that it will perform well on any dimension other than the agents' expected waiting times.
On the other hand, it exhibits `fairness' relative to the agents' arrival times, a feature which is crucial in many applications.%
\footnote{Indeed, this may be a desirable feature in applications such as processor time requests in distributed computing. We shall leave extensions of our analyses to include fairness considerations for future work.}

\Paragraph{Overview of results.}
\cref{tab:schedules} summarizes all \aclp{CS} analyzed below (including a `balanced' schedule, $\CS_{\balanced}$, that we discuss in \cref{sec:planner}).
Our results (in terms of each schedule's expected matching and waiting ratio) are then summarized in \cref{tab:results}:
as can be seen, the family of schedules under study captures the full range between schedules that are `good' relative to mismatches and `bad' relative to waiting times, and vice versa.
\medskip


\begin{table}[t]
\centering
\renewcommand{\arraystretch}{1.5}
\small

\begin{tabular}{>{\flushright\arraybackslash}m{4em}m{30em}}
\hline
$\CS_{\FIFO}$
	&match players as soon as possible on a \acl{FCFS} basis
	\\
\hline
$\CS_{\greedy}$
	&match players as soon as possible
	\\
\hline
$\CS_{\gamma}$
	&match the $k$-th couple when $\Theta(k^\gamma)$ players are on the short side of the market ($0\leq\gamma\leq1$)
	\\
\hline
$\CS_{\patient}$
	&match players optimally after everyone has arrived
	\\
\hline
\hline
$\CS_{\balanced}$
	&match the $k$-th couple when $\Theta(k^{1/2} (\log k)^{1/3})$ players are on the short side of the market.
	\\
\hline
\end{tabular}
\vspace{\medskipamount}
\caption{Overview of the various \aclp{CS} considered in the sequel. Note that all schedules other than $\CS_{\FIFO}$ match the couple with the minimum match cost at each \acl{ME}.}
\label{tab:schedules}
\end{table}




\begin{table}[h]
\renewcommand{\arraystretch}{1.5}
\small

\begin{tabular}{|l||l|ll|ll|}
\multicolumn{1}{l}{\textsc{Schedule}}
	&\multicolumn{1}{l}{\textsc{Description}}
	&\multicolumn{2}{l}{\textsc{Matching ratio, $\alpha$}}
	&\multicolumn{2}{l}{\textsc{Waiting ratio, $\beta$} }
	\\
\hline
$\CS_{\FIFO}$
	& \ac{FCFS} matching
	& $\Theta(A)$
	&
	&1
	&
	\\
\hline 
$\CS_{\greedy}$
	& Greedy matching 
	& $\Omega(A^{1/2})$
	&\tikzmark{a}{}
	& 1
	&\tikzmark{c}{}
	\\ 
$\CS_{0\leq\gamma<1/2}$
	& Subcritical rate matching
	& $\Omega(A^{1/2-\gamma})$
	&
	& $\Theta(1)$
	&
	\\ 
$\CS_{\gamma=1/2}$
	& Critical rate matching 
	& $\Theta(\log A)$
	&
	& $\Theta(1)$
	&
	\\ 
$\CS_{1/2<\gamma\leq1}$
	& Supercritical rate matching 
	& $\Theta(1)$
	&
	& $\Theta(\tau^{\gamma-1/2})$
	&
	\\ 
$\CS_{\patient}$
	& Patient matching 
	& 1
	&\tikzmark{b}{}
	& $\Theta({\tau}^{1/2})$
	&\tikzmark{d}{}
	\\
\hline 
$\CS_{\balanced}$
	& Balanced matching 
	& $\Theta((\log {A})^{1/3})$
	&
	& $\Theta((\log {A})^{1/3})$
	&
	\\
\hline
\end{tabular}
\link{b}{a}
\link{c}{d}
\vspace{\medskipamount}
\caption{The range of expected matching and waiting ratios;
$\CS_{\balanced}$ is discussed in \cref{sec:planner}. Recall from \cref{eq:matchingratio} and \cref{eq:waitingratio}  the expected matching ratio
$\alpha
	\equiv \alpha(A)
	= \frac{\cost_{\CS}(A)}{\cost_{\patient}(A)}$ ($\cost_{\patient}(A)=\Theta(1)$)
and the expected waiting ratio
$\beta
	\equiv \beta(\tau)
	= \frac{\wait_{\CS}(\tau)}{\wait_{\greedy}(\tau)}$ ($\wait_{\greedy}(\tau)=\Theta(\tau^{3/2})$).
	}
\label{tab:results}
\end{table}


In view of these results, the \acl{CS} $\CS_{\gamma=1/2}$ can be seen as a \emph{phase transition} between two markedly different regimes.
On the one hand, for $\gamma < 1/2$, the expected matching ratio $\alpha(A)$ grows as a power law in $A$ while the expected waiting ratio $\beta(\tau)$ is finite.
On the other hand, for $\gamma > 1/2$, we have a finite expected matching ratio but an expected waiting ratio that grows polynomially.
Finally, at the critical point $\gamma = 1/2$, the expected matching ratio grows to infinity for large $A$, but at a slow, logarithmic rate ($\Theta(\log A)$).
Notably, the phase transition at $\gamma=1/2$ signifies a discontinuity of the expected matching ratio, so it is a \emph{first-order} phase transition;
by contrast, the expected waiting ratio exhibits no such discontinuity, signifying a \emph{second-order} phase transition.

The infinite matching ratio vis-a-vis the finite waiting ratio for $\gamma=1/2$ suggests that further fine-tuning should be possible and, indeed,  the `balanced' schedule $\CS_{\balanced}$ (which we define and discuss in \cref{sec:planner})
reduces the growth of the expected matching ratio by a factor of $(\log A)^{2/3}$
while increasing the  expected waiting ratio $\beta(\tau)$ by a factor of $(\log \tau)^{1/3}$.
In a sense (that we shall make precise in the next section) this is as close as we can get to a `free lunch' in this setting.

\Paragraph{Formal statements.}
We now proceed to provide complete statements of the results discussed above.
To streamline our presentation, we have relegated the detailed proofs to \cref{app:match,app:wait};
however, the main pattern of the proofs can also be seen in \cref{sec:planner} where we treat the case of $\CS_{\balanced}$.

We begin with our results for the matching cost ratio $\alpha$:

\begin{theorem}
\label{THM:COST}
The expected matching ratios for the schedules under study are as follows:
\begin{subequations}
\label{eq:cost-bounds}
\begin{alignat}{3}
&
		\text{\textpar{$\CS_{\FIFO}$}}&&\text{\; \ac{FCFS} matching:}
	&&\quad
	\alpha_{\FIFO}
		= \frac{6\underline \rate\rate}{\pi^{2}} A
\label{eq:cost-FIFO}
\\[3pt]
&
	\text{\textpar{$\CS_{\greedy}$}}&&\text{\; Greedy matching:}
	&&\quad
	\alpha_{\greedy}
		\geq \frac{6\underline\rate}{5\pi^{2}} A^{1/2}
\label{eq:cost-greedy}
\\[3pt]
&
	\text{\textpar{$\CS_{0\leq\gamma<1/2}$}}&&\text{\; Subcritical rate matching:}
	&&\quad
	\underline C_{\gamma} \, A^{1/2-\gamma}
		\leq \alpha_{0\leq\gamma<1/2}
		\leq \overline C_{\gamma} \, A^{1-2\gamma}
\label{eq:cost-subcritical}
\\[3pt]
&
	\text{\textpar{$\CS_{\gamma=1/2}$}}&&\text{\; Critical rate matching:}
	&&\quad
	\frac{2\underline\rate}{\pi^{2}} \log A
		\leq \alpha_{\gamma=1/2}
		\leq \frac{\overline\rate}{\underline\rate} \frac{1 + \log A}{\log 2}
\label{eq:cost-critical}
\\[3pt]
&
	\text{\textpar{$\CS_{1/2<\gamma\leq1}$}}&&\text{\; Supercritical rate matching:}
	&&\quad
	\alpha_{1/2<\gamma\leq1}
		= \frac{\overline \rate}{\underline\rate} \frac{\zeta(2\gamma)}{\log 2}
\label{eq:cost-supercritical}
\\[3pt]
&
	\text{\textpar{$\CS_{\patient}$}}&&\text{\; Patient matching:}
	&&\quad
	\alpha_{\patient}
		= 1
\label{eq:cost-patient}
\end{alignat}
\end{subequations}
\begin{remark*}
In the above, $\underline C_{\gamma}$ and $\overline C_{\gamma}$ are positive constants, and $\zeta(s) = \sum_{n=1}^{\infty} 1/n^{s}$ denotes the Riemann zeta function (so $\zeta(s)<\infty$ for all $s>1$). 
\end{remark*}
\end{theorem}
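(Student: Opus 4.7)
The plan is to establish each of the six bounds in \eqref{eq:cost-bounds} separately by estimating the expected cost $\ex[\match_k]$ of the $k$-th matched couple and summing over $k = 1,\dots,A$. The patient case \eqref{eq:cost-patient} is immediate from the definition $\alpha = \cost_{\CS}/\cost_{\patient}$. For $\CS_{\FIFO}$ the matching rule ignores costs, so $\match_k \sim \exp(\rate_{\boy_k \girl_k})$ for an arrival-determined pair $(\boy_k,\girl_k)$; linearity of expectation together with the definition of the mean rate $\rate$ yields $\cost_{\FIFO}(A) = \Theta(A/\rate)$, and dividing by the upper bound $\cost_{\patient} \leq \pi^{2}/(6\underline\rate)$ from \cref{prop:optimalbaseline} produces \eqref{eq:cost-FIFO}.

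The heart of the theorem concerns the $\CS_\gamma$ family. At the $k$-th matching event the schedule selects the minimum-weight edge in a bipartite sub-market of size $f(k) \times M_k$, with $f(k) = \Theta(k^\gamma)$ on the short side and $M_k \geq f(k)$ on the long side. Building on the Aldous--Wästlund identity $\sum_{i+j<n} 1/((M-i)(N-j))$ for the expected minimum-weight $n$-matching in an $M\times N$ random assignment with $\exp(1)$ weights, I would bound $\ex[\match_k]$ from above by restricting to the edges that first became available in the interval between the previous and current matching events: these carry iid $\exp(\rate_{ij})$ weights with $\rate_{ij} \geq \underline\rate$, independently of the matching history, so their minimum dominates $\match_k$ and yields $\ex[\match_k] \leq C_1/(\underline\rate\, f(k)^{2})$. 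For the matching lower bound I would combine $\rate_{ij} \leq \overline\rate$ with a random-walk control of the form $M_k = f(k) + O(\sqrt{k})$ on the long side. Summing $\sum_{k=1}^{A} k^{-2\gamma}$ then produces the three regimes stated in the theorem: a convergent $\zeta(2\gamma)$ tail for $\gamma > 1/2$, a harmonic $\log A$ divergence for $\gamma = 1/2$, and a polynomial $A^{1-2\gamma}$ divergence for $\gamma < 1/2$. The greedy bound \eqref{eq:cost-greedy} is a special case in which the short side contains at most one agent at each matching instant, so $\match_k$ is the minimum of $|S_{\tau_k}|$ independent exponentials; the Wiener-process approximation deployed in the proof of \cref{prop:optimalbaseline} supplies $\ex[|S_\tau|] = \Theta(\sqrt\tau)$, and since one pair is matched per Poisson tick on average, $\tau_k = \Theta(k)$. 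Jensen's inequality applied to $x \mapsto 1/x$ then gives $\ex[\match_k] \geq 1/(\overline\rate\,\ex[|S_{\tau_k}|]) = \Omega(k^{-1/2})$, and summing yields $\cost_{\greedy}(A) = \Omega(\sqrt A)$.

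The main obstacle I anticipate is the history-dependence of the match-cost distribution across matching events: after an event removes the minimum edge, the surviving weights are conditioned on exceeding the observed minimum, and the surviving agents themselves form a history-biased sample. I plan to circumvent this in two complementary ways. For upper bounds I restrict the comparison to genuinely fresh edges, whose weights are unconditionally exponential by construction. For lower bounds I invoke the memoryless property of the exponential, which ensures that residual weights of edges incident to a surviving agent remain exponentially distributed with their original rate, permitting a recursive application of the single-event estimate. The slight gap between $\Omega(A^{1/2-\gamma})$ and $O(A^{1-2\gamma})$ in the subcritical regime \eqref{eq:cost-subcritical} is not an artifact of these tactics but reflects the different scales governing early matches (greedy-like, contributing $\sqrt{A}$) and late matches (driven by the $A^{1-2\gamma}$ harmonic sum); I would establish the lower bound by a bootstrap comparison to $\CS_{\greedy}$ on the initial portion of the schedule, and the upper bound by the fresh-edge estimate applied uniformly.
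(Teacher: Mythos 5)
Your overall architecture (estimate the expected cost of the $k$-th match, then sum over $k\le A$ and divide by $\cost_{\patient}$) is the same as the paper's, and your treatment of the patient, \ac{FCFS} and greedy cases is essentially the paper's argument (Jensen plus the random-walk estimate $\exof{\abs{S_\tau}}=\Theta(\sqrt\tau)$ and $\tau_k=\Theta(k)$). The genuine gap is in your upper bounds for the $\CS_\gamma$ family, which are the heart of the theorem. You propose to neutralize the (perceived) history-dependence by bounding $\match_k$ by the minimum over only those edges that appeared between the $(k-1)$-th and $k$-th matching events. But between consecutive events only $O(1)$ agents arrive in expectation, and each new agent contributes one candidate edge per agent on the opposite side, so the number of fresh edges is $O\bigl(f(k)+\sqrt k\bigr)$ \textemdash\ linear, not quadratic, in the short-side size. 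Your restriction therefore yields only $\exof{\match_k}=O\bigl(1/(\underline\rate f(k))\bigr)$, not the claimed $O\bigl(1/(\underline\rate f(k)^2)\bigr)$. Summing $1/f(k)$ gives $\Theta(A^{1/2})$ at $\gamma=1/2$ and $\Theta(A^{1-\gamma})$ for $\gamma>1/2$, so the critical bound $\Theta(\log A)$ in \eqref{eq:cost-critical} and the finite bound $\zeta(2\gamma)$ in \eqref{eq:cost-supercritical} \textemdash\ i.e., precisely the phase transition \textemdash\ are out of reach of the fresh-edge device, as is the $A^{1-2\gamma}$ upper bound in \eqref{eq:cost-subcritical} once $\gamma>1/4$.

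The resolution is that no fresh-edge surgery is needed: the model posits that the costs $\match_{\boy\girl}$ are independent exponential draws at any time $\tau$, so at the $k$-th matching event \emph{all} $\ceil{f(k)}\cdot\bigl(\ceil{f(k)}+\abs{S_{t(k,f(k))}}\bigr)\ge f(k)^2$ currently available edges are unconditioned exponentials with rates at least $\underline\rate$. The paper applies \cref{lemma_minofexpvariables} (minimum of independent exponentials) directly to all current edges, giving $\exof{\match_k}\le 1/(\underline\rate f(k)^2)$ and hence the upper bounds by summation; the matching lower bounds come from Jensen's inequality combined with the stopping-time estimate $\exof{t(k,f(k))}<10k$ and the bound $\exof{\abs{S_m}}\le 1.23\sqrt m$ of \cref{lemma_randomwalkonZdistancefromzero}, which plays the role of your long-side control $M_k=f(k)+O(\sqrt k)$. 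Two smaller points: the subcritical lower bound $\Omega(A^{1/2-\gamma})$ falls out of this same Jensen computation, $\sum_{k\le A}k^{-(1/2+\gamma)}=\Theta(A^{1/2-\gamma})$, and cannot be obtained by a ``bootstrap comparison to $\CS_{\greedy}$'' (greedy's per-match costs are \emph{larger}, so it does not lower-bound the early matches of $\CS_\gamma$); and the Aldous\textendash W\"astlund summation formula is not needed in this part at all, since each event matches only the single cheapest edge \textemdash\ it enters only through $\cost_{\patient}$ via \cref{prop:optimalbaseline}.
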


By contrast, for the expected waiting ratio $\beta$, we have:

\begin{theorem}
\label{THM:WAIT}
\begin{subequations}
\label{eq:wait-bounds}
The expected waiting ratios for the schedules under study are as follows:
\begin{alignat}{3}
&
		\text{\textpar{$\CS_{\FIFO}$}}&&\text{\;\; \ac{FCFS} matching:}
	&&\qquad
	\beta_{\FIFO}
		= 1
\label{eq:wait-FIFO}
\\[3pt]
&
	\text{\textpar{$\CS_{\greedy}$}}&&\text{\;\; Greedy matching:}
	&&\qquad
	\beta_{\greedy}
		= 1
\label{eq:wait-greedy}
\\[3pt]
&
	\text{\textpar{$\CS_{0\leq\gamma<1/2}$}}&&\text{\;\; Subcritical rate matching:}
	&&\qquad
	\beta_{0\leq\gamma<1/2}
		= \Theta(1)
\label{eq:wait-subcritical}
\\[3pt]
&
	\text{\textpar{$\CS_{\gamma=1/2}$}}&&\text{\;\; Critical rate matching:}
	&&\qquad
	\beta_{\gamma=1/2}
		= \Theta(1)
\label{eq:wait-critical}
\\[3pt]
&
	\text{\textpar{$\CS_{1/2<\gamma\leq1}$}}&&\text{\;\; Supercritical rate matching:}
	&&\qquad
	\beta_{1/2<\gamma\leq1}
		= \Theta(\tau^{\gamma-1/2})
\label{eq:wait-supercritical}
\\[3pt]
&
	\text{\textpar{$\CS_{\patient}$}}&&\text{\;\; Patient matching:}
	&&\qquad
	\beta_{\patient}
		= \Theta(\tau^{1/2})
\label{eq:wait-patient}
\end{alignat}
\end{subequations}
\end{theorem}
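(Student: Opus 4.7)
The overarching plan is to decompose the number of unmatched agents at time $\tau$ into two contributions and track which one dominates after integration: a ``buffer'' of agents held on the short side by the schedule itself, and the random-walk imbalance $S_\tau = \nBoys(\tau) - \nGirls(\tau)$ carried entirely on the long side. Since $\wait_\CS(\tau) = \int_0^\tau \exof{R(s)} \, ds$ by Tonelli's theorem, the order of $\beta$ is dictated by whichever of the two terms dominates the integrand, and the crossover turns out to occur precisely at $\gamma = 1/2$.

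The cases $\CS_{\FIFO}$ and $\CS_{\greedy}$ are immediate: both schedules trigger a match the instant one client and one provider coexist, so $R(\tau) = |S_\tau|$ pointwise, and the expected waiting time reproduces the $(2/3)\tau^{3/2}$ calculation of \cref{prop:optimalbaseline}. Hence $\beta_{\FIFO} = \beta_{\greedy} = 1$ by definition.

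For the family $\CS_\gamma$ with $f(k) = \Theta(k^\gamma)$, the first step is to establish that the number $A(\tau)$ of matches completed by time $\tau$ satisfies $A(\tau) = \tau/2 + o(\tau)$ in expectation. This follows from the rate-$1$ Poisson arrivals, which give $\nBoys(\tau) + \nGirls(\tau) = \tau + O(\sqrt{\tau})$, combined with the identity $\nBoys + \nGirls = 2A + R$, where $R$ consists of roughly $2f(A)$ short-side agents held in the buffer plus $|S_\tau| = O(\sqrt{\tau})$ on the long side. Solving $A + f(A) \approx \tau/2$ gives $A(\tau) \approx \tau/2$ for every $\gamma < 1$, and $A(\tau) = \Theta(\tau)$ with $f(A(\tau)) = \Theta(\tau)$ for $\gamma = 1$. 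Substituting back yields $\exof{R(\tau)} = \Theta(\tau^\gamma) + \Theta(\sqrt{\tau})$, so that $\wait_{\CS_\gamma}(\tau) = \Theta(\tau^{\gamma+1}) + \Theta(\tau^{3/2})$ after integration. Dividing by $\wait_\greedy(\tau) = \Theta(\tau^{3/2})$ then delivers $\beta_\gamma = \Theta(1)$ for $\gamma \le 1/2$ (the random walk dominates) and $\beta_\gamma = \Theta(\tau^{\gamma-1/2})$ for $\gamma > 1/2$ (the buffer dominates). The patient schedule $\CS_{\patient}$ arises as the degenerate case $A(\tau) \equiv 0$: then $R(\tau) = \nBoys(\tau) + \nGirls(\tau)$ has expectation $\tau$, giving $\wait_\patient(\tau) = \Theta(\tau^2)$ and $\beta_\patient = \Theta(\sqrt{\tau})$.

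The main technical obstacle will be making the approximation $A(\tau) \approx \tau/2$ sharp enough that $f(A(\tau)) = \Theta(\tau^\gamma)$ holds uniformly in $\tau$, so that the integration does not pick up parasitic logarithmic corrections in the borderline regime $\gamma = 1/2$. I expect this to be handled by the same Tonelli swap and Wiener-process coupling used in the proof of \cref{prop:optimalbaseline}, supplemented by Doob's inequality (or the law of the iterated logarithm) to control the fluctuations of the centered arrival process uniformly on compact time intervals.
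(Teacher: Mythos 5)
Your proposal is correct in substance and rests on the same underlying picture as the paper's proof: the unmatched population splits into a schedule-imposed buffer of order $f(A(\tau))=\Theta(\tau^\gamma)$ on the short side and a random-walk imbalance $\exof{\abs{S_\tau}}=\Theta(\sqrt\tau)$ on the long side, and the crossover at $\gamma=1/2$ is exactly where these two exchange dominance. Where you differ is in execution. The paper never computes $\exof{R(\tau)}$ directly; instead it sandwiches each $\CS_\gamma$ between explicit comparison processes \textendash\ an alternating-arrival process for the lower bound (which isolates the buffer term $\int_0^\horizon 2\tau^\gamma\dd\tau$) and a modified ``build the buffer up front, then stay greedy'' schedule for the upper bound, whose waiting time decomposes as $5\horizon^\gamma\cdot 5\horizon^\gamma+\tfrac23\horizon^{3/2}+5\horizon^\gamma\cdot\horizon$ \textendash\ thereby sidestepping any need to control $A(\tau)$ uniformly in $\tau$. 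You propose instead to prove $A(\tau)=\tau/2+o(\tau)$ and integrate the resulting pointwise estimate of $\exof{R(\tau)}$. That route works and is arguably more transparent, but note that the concentration step you flag as the main obstacle is lighter than you suggest: since $A(\tau)=N(\tau)-f(A(\tau))+\bigoh(1)$ with $N(\tau)=\tau/2-\exof{\abs{S_\tau}}/2=\tau/2+\bigoh(\sqrt\tau)$ in expectation, any $f(k)=o(k)$ forces $A(\tau)=(1+o(1))\,\tau/2$, and $x\mapsto x^\gamma$ converts a $(1+o(1))$ multiplicative error in the argument into a $(1+o(1))$ error in the value, so no logarithmic correction can appear at $\gamma=1/2$; moreover only the expectation bound $\exof{\abs{S_\tau}}=\Theta(\sqrt\tau)$ (\cref{lemma_randomwalkonZdistancefromzero}) is needed, not the almost-sure law of the iterated logarithm. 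The one case requiring separate care in your scheme is $\gamma=1$, where $f(k)=\Theta(k)$ is no longer $o(k)$ and $A(\tau)$ is only a constant fraction of $\tau/2$, but the conclusion $\beta=\Theta(\tau^{1/2})$ is unaffected, as you observe.
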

In closing this section, it is worth noting that the bounds for $\alpha$ become asymptotically `less tight' for small $\gamma<\frac{1}{2}$.
As far as this gap is concerned, we conjecture that the upper bound is the tight one:
the lower bound is obtained via a crude approximation using Jensen's inequality, and this could be potentially tightened (although we haven't been able to do so).
By contrast, the approximation for the upper bound seems less drastic.

We should also note that the results in \cref{THM:WAIT} are driven by the assumption that the arrival of either a client or a provider at every stage of the process is equally likely.
This entails that the expected absolute difference of clients and providers $\abs{\nBoys(\tau) - \nGirls (\tau)}$ can by approximated by a Wiener process as detailed in \cref{app:wait}.
For the latter we know that the expectation is $\sqrt \tau$, so $\abs{\nBoys(\tau)-\nGirls (\tau)} \approx \sqrt \tau$ in expectation.
It would be interesting to consider different arrival processes such as an urn model with delayed replacement where $\abs{\nBoys(\tau)-\nGirls (\tau)}$ would exhibit a different asymptotic behavior;
we leave this analysis to future work.

\if{false}
\textbf{Notes:} 

\textcolor{red}{
\cite{Ash17} cost of waiting is $(t-t_1)+(t-t_2)$ where $t$ is the time of matching, $t_1,t_2$ is the arrival time of the two agents being matched (obviously $t\geq t_1, t_2$).
\\ \\
\cite{Eme16} also treat space and time cost and then minimize their sum.
}
\fi

\section{A balanced social planner}
\label{sec:planner}

Until now, we have analyzed \aclp{CS} based on the trade-off between waiting time and matching cost, but without explicitly comparing the two.
In this section, we shall commit to a specific class of utility functions in order to make
an explicit comparison between these otherwise
incomparable quantities.

To that end, let $u(\cdot)$ denote the expected utility (or `welfare') of the social planner given a specific \aclp{CS}.
Assume further that the functions expressing this utility depend on both the expected matching cost and the expected waiting time via the additively separable expression
\begin{equation}
u(\CS)
	= u_{\cost}(\alpha_{\CS})+u_{\wait}(\beta_{\CS})
\end{equation}
In order to make comparisons between the utility components $u_{\cost}$ and $u_{\wait}$ we shall first consider their respective maximum values.
It is then natural to assume that $u_{\cost}$ is maximal for the patient \acl{CS} (which minimizes matching cost) and that $u_{\wait}$ is maximal for the  greedy \acl{CS} (which minimizes waiting time).
We shall thus assume that the two maxima are of the same order, viz.,
\begin{equation}
\label{eq:plannerbalance1}
\sigma \cdot u_{\cost}(\alpha_{\patient})
	= (1-\sigma) \cdot u_{\wait}(\beta_{\greedy})
\end{equation}
where $\sigma\in (0,1)$ is a constant factor that specifies the relative importances of the disutilities from mismatching versus waiting.
Naturally, we require that the social planner
seeks to minimize both the costs of matching and the agents' waiting time.
As such, we make the assumption that $u_{\cost}$ is a concave function that decreases in the expected matching cost, and $ u_{\wait}$ is a concave function that decreases in the expected waiting time. 

In view of all this, a social planner is said to be \emph{balanced} if the disutilities from mismatching and waiting display similar growths for large $\tau$.
That is, for a given \acl{CS} $\CS$ with expected matching ratio $\alpha$ and expected waiting time $\beta$, we assume that
\begin{equation}
\label{eq:sameorderutility}
u_{\cost}(\alpha_\CS)
	= \Theta (u_{\wait}(\beta_\CS)) \: \:  \text{ whenever } \:\:  \alpha_{\CS}=\Theta(\beta_{\CS})
\end{equation}

In this general context, we obtain the following result governing balanced social planning:

\begin{theorem}
\label{prop:socialplanner}
Let $\CS_{\balanced}$ be the \acl{CS} that matches the $k$-th couple when  $N-(k-1)=\ceil{k^{1/2} (\log k)^{1/3}}$ players are on the short side of the market.
The expected matching  and waiting  ratios incurred by $\CS_{\balanced}$ are both $\Theta((\log A)^{1/3})$;
moreover, any other schedule $\CS_{f}$ achieving this balance has $f(k) = \Theta(k^{1/2} (\log k)^{1/3})$.
\end{theorem}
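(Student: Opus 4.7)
The plan is to derive asymptotic formulas, as functionals of the clearing schedule $f$, for (a) the expected matching cost $\cost_{f}(A)$ and (b) the expected waiting time $\wait_{f}(\tau)$, then identify the unique (up to $\Theta(\cdot)$) Hardy-class function $f$ that makes the two of the same order. Throughout, we work with $f$ in the class of logarithmico-exponential functions \citep{Har10old}, so that growth comparisons are unambiguous.

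At the $k$-th matching event the short side has $\ceil{f(k)}$ unmatched players and the long side has $\ceil{f(k)} + D_{k}$, where $D_{k} = \abs{\nBoys - \nGirls}$. Since arrivals are Bernoulli, the law of the iterated logarithm gives $D_{k} = O(\sqrt{k \log\log k})$ almost surely, hence $D_k = o(f(k))$ for $f(k) = k^{1/2}(\log k)^{1/3}$. The pair matched at stage $k$ is therefore the minimum of $\Theta(f(k)^{2})$ independent exponentials with rates in $[\underline\rate, \overline\rate]$, whose expectation is $\Theta(1/f(k)^{2})$. For the waiting time, $R(\tau)$ grows by one with each arrival and drops by two at each match, so between consecutive matching times $\sigma_k$ and $\sigma_{k+1}$ the count $R$ grows linearly (in expectation) from $\approx 2f(k)$ to $\approx 2f(k+1)$ over an interval of expected length $\approx 2(f(k+1)-f(k)+1)$. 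Integrating over each such interval and telescoping yields
\begin{equation}
\label{eq:proof-balanced-master}
\cost_{f}(A) = \Theta\!\left(\sum_{k=1}^{A} \frac{1}{f(k)^{2}}\right),
\qquad
\wait_{f}(\sigma_{A}) = \Theta\!\left(f(A)^{2} + \sum_{k=1}^{A} f(k)\right),
\end{equation}
with $\sigma_{A} \sim 2A$ because $f(k) = o(k)$. Plugging $f(k) = k^{1/2}(\log k)^{1/3}$ into the first sum and substituting $u = \log k$ gives $\Theta(\int u^{-2/3}\,du) = \Theta((\log A)^{1/3})$; plugging into the second gives $\sum_{k=1}^A f(k) \sim \tfrac{2}{3}A^{3/2}(\log A)^{1/3}$ (by integration by parts), which dominates $f(A)^{2} = A(\log A)^{2/3}$. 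Dividing by $\cost_{\patient}(A) = \Theta(1)$ and $\wait_{\greedy}(\sigma_{A}) = \Theta(A^{3/2})$ from \cref{prop:optimalbaseline} gives $\alpha_{\balanced} = \beta_{\balanced} = \Theta((\log A)^{1/3})$.

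For uniqueness, write an arbitrary Hardy-class candidate as $f(k) = k^{\gamma} L(k)$ with $L$ slowly varying. Formulas \eqref{eq:proof-balanced-master} yield $\alpha = \Theta(A^{1-2\gamma} L(A)^{-2})$ when $\gamma < 1/2$, $\alpha = \Theta(1)$ when $\gamma > 1/2$, and $\beta = \Theta(A^{\gamma - 1/2} L(A))$ in every case. Any $\gamma \neq 1/2$ produces a polynomial mismatch in $A$ between $\alpha$ and $\beta$, so the balance $\alpha = \Theta(\beta)$ forces $\gamma = 1/2$. At $\gamma = 1/2$ the balance equation becomes $\int_{1}^{A} du/(uL(u)^{2}) = \Theta(L(A))$; differentiating in $A$ gives $L(A)^{2} L'(A) = \Theta(1/A)$, equivalently $\tfrac{d}{dA}[L(A)^{3}] = \Theta(1/A)$, which integrates to $L(A)^{3} = \Theta(\log A)$. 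Hence $f(k) = \Theta(k^{1/2}(\log k)^{1/3})$.

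The main obstacle is the matching-cost estimate: at the balance point $f(k) = k^{1/2}(\log k)^{1/3}$ exceeds the natural fluctuation scale $\sqrt{k}$ only by a $(\log k)^{1/3}$ factor, so a crude CLT bound on $D_k$ is not sharp enough to conclude that the long side carries $\Theta(f(k))$ unmatched agents with high probability; the law of the iterated logarithm is essential here, whereas for the coarser schedules of \cref{sec:natural} a CLT bound already suffices.
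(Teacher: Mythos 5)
Your proposal is essentially correct and, for the main estimates, follows the same route as the paper: the same master formula $\cost_f(A)=\Theta\bigl(\sum_{k\le A} f(k)^{-2}\bigr)$ obtained by viewing each matching event as the minimum of $\Theta(f(k)^2)$ exponentials, the same waiting-time count $\Theta\bigl(A^{-3/2}\sum_{k\le A} f(k)\bigr)$, the same appeal to the law of the iterated logarithm to control the imbalance $\abs{S_k}$, and the same balance equation verified at $f(x)=\sqrt{x}(\log x)^{1/3}$. Where you genuinely diverge is the uniqueness step: the paper splits into the cases $f=\Omega(g)$ and $f=o(g)$ with $g(k)=\sqrt{2k\log\log k}$ and then argues by monotonicity (the cost side of the balance equation decreases in $f$ while the waiting side increases), whereas you parametrize $f(k)=k^{\gamma}L(k)$, rule out $\gamma\neq 1/2$, and solve $L^2L'=\Theta(1/A)$ to force $L^3=\Theta(\log A)$. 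Your ODE route is more explicit about \emph{why} the exponent $1/3$ appears, but differentiating a $\Theta$-relation is only legitimate because you restrict to Hardy's logarithmico-exponential class; you should say this is where that hypothesis is used. One small overreach: the formula $\cost_f(A)=\Theta\bigl(\sum f(k)^{-2}\bigr)$ is justified only when $f(k)=\Omega(\sqrt{k})$ (you yourself verify $D_k=o(f(k))$ only for the balanced schedule), since for $\gamma<1/2$ the long side carries $\Theta(\sqrt{k})$ agents from fluctuations alone and the candidate-pair count is $\Theta(f(k)\sqrt{k})$, not $\Theta(f(k)^2)$ \textemdash\ indeed the paper only proves $\Omega(A^{1/2-\gamma})\le\alpha\le O(A^{1-2\gamma})$ there. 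This does not damage your conclusion, because even the weaker lower bound makes $\alpha$ polynomially large while $\beta=\Theta(1)$ in that regime, so $\gamma=1/2$ is still forced; but the claim ``$\alpha=\Theta(A^{1-2\gamma}L(A)^{-2})$ when $\gamma<1/2$'' should be weakened accordingly.
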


\begin{remark*}
For technical reasons we state our result in terms of the number of matched couples ($A=A(\tau)$). Note that, for any clearing schedule where the proportion of matched players increases over time (more precisely, where $\lim_{\tau\to\infty}\frac{A(\tau)}{\nBoys(\tau) + \nGirls(\tau)}= 1$),  $A$ is growing at the same rate as $\tau$.
\end{remark*}

\begin{proof}[Proof of \cref{prop:socialplanner}.]
Consider a \acl{CS} of the form $\CS_{f}$ that matches the $k$-th couple when 
$\ceil{f(k)}$ players on the short side of the market.
In order to balance the expected matching and waiting ratios, any such \acl{CS} would have to satisfy $f(k)=\omega(\sqrt k)$;
otherwise, the expected matching ratio would dominate asymptotically the expected waiting ratio (see \cref{tab:results}).
Thus, without loss of generality, we can assume that $f(k)$ is non-decreasing for large $k$.

Let $t(k,f(k))$ be the stopping time for the event that for the $k$-th time at least $f(k)$ clients and $f(k)$ providers are in the market, assuming that every time this is the case one client and one provider are removed. Finally, recall that $S_\tau= N_\mathcal C (\tau) -N_\mathcal P(\tau)$ is the difference of clients and providers who have arrived to the market until $\tau$.

We begin with the expected matching ratio. For the upper bound we have:
\begin{flalign}
\exof*{\sum_{k=1}^{A} \frac{1}{(\ceil{f(k)}+|S_{t(k,f(k))}|)\ceil{f(k)}}}
	&= \sum_{k=1}^{A} \exof*{\frac{1}{(\ceil{f(k)}+|S_{t(k,f(k))}|)\ceil{f(k)}}}\notag\\
	&\leq \sum_{k=1}^{A} \ceil{f(k)}^{-2}
\end{flalign}
For the lower bound, an algebraic argument which we make precise in \cref{app:match} (\cf  \cref{qwerfwe}) shows that $\exof{t(k, f(k))} < 10k$.
Furthermore, note that $\exof{S_t}$ is strictly increasing in $t$. Thus, by Jensen's inequality,
and \cref{lemma_randomwalkonZdistancefromzero} (which is bounding $|S_t|$ via a combinatorial argument and using Stirling's formula), we get:
\begin{flalign}
\sum_{k=1}^{A}
	\exof*{\frac{1}{(\ceil{f(k)}+|S_{t(k,f(k))}|)\ceil{f(k)}}}
	&\geq \sum_{k=1}^{A} \frac{1}{(\ceil{f(k)}+\mathbb E [|S_{10k}|])\ceil{f(k)}}
	\notag\\
	&\geq \frac{\pi}{\sqrt{2}e } \sum_{k=1}^{A} \frac{1}{(\ceil{f(k)}+\sqrt{10k})\ceil{f(k)}}
	\notag\\
 &	= \Theta\parens*{\sum_{k=1}^{A} \frac{1}{f(k)^{2}}}
\label{helphelpehelp}
\end{flalign}
where the last line follows from the assumption $f(k)=\omega(k^{1/2})$. 
Thus the two bounds together with the fact that the patient schedule has finite matching cost yield the result that, for $f(k)=\omega(k^{1/2})$ the expected matching ratio is 
\(
\alpha(A)
	= \Theta\parens*{\sum_{k=1}^{A} 1/\bracks{f(k)^{2}}}.
\)
	
We proceed, by considering the incurred waiting time.
Recall that $N(\tau) - A(\tau)$ is the number of agents on the shorter side of the market at time $\tau$, so
\(
N-A
	= \ceil{f(k-1)}-1
\)
after the $(k-1)$-st match.
The number of clients and the number of providers that need to arrive to the market before the $k$-th match is thus upper bounded by
\begin{equation}
\ceil{f(k)}-\big(\ceil{f(k-1)}-1\big )=1+\ceil{f(k)}-\ceil{f(k-1)}
\leq 2
\end{equation}
where the last inequality follows from the fact that $f(k)=o(k)$ is a necessary condition for a feasible \acl{CS} (that is, a \acl{CS} where the proportion of unmatched versus matched players is decreasing).
The expected waiting time accrued between the $(k-1)$-st and the $k$-th match is therefore upper bounded by:
\begin{equation}
\label{biggy}
\Delta_k:=\underbrace{\mathbb E [\text{time s.t. $\geq 2$ clients \& $\geq 2$ providers enter market}]}_{=:\Delta_k^1}\cdot \underbrace{\big(2\ceil{f(k)} +\ceil{g(k)} \big)}_{=:\Delta_k^2}
\end{equation}
where $\ceil{g(k)}$ is a function that we will use to upper bound $|S_k|$, viz., the random variable constituting  the absolute difference of clients and providers in the market at time $\tau(k)$.
For posterity, note also that $\Delta_k^1$ is the expectation of the time between the  $(k-1)$-th and the $k$-th match and $\Delta_k^2$
provides an upper bound for the number of agents waiting in the time interval between the $(k-1)$-th and the $k$-th match.

Given the arrival of agents is governed by a Poisson clock of rate one, we have $\Delta_k^1=5$, \ie on average, five agents need to enter the market to have at least two clients and at least two providers.
To see this, let $Y$ be the number of flips of a coin required to observe at least 2 heads (clients) and 2 tails (providers). The event `$Y>k$' is then equivalent to the union of the events `$\binom{k}{k-1}$ heads' and `$\binom{k}{k-1}$ tails'. The two latter events are disjoint and each has probability $\frac{k}{2^k}$. Thus $\mathbb P [Y>k]=\frac{k}{2^{k-1}}$ and we have
\begin{eqnarray}
&\exof{Y}
	&= \sum_{k=0}^\infty \probof{Y>k}
	= 1 + 2\sum_{k=1}^\infty \frac{k}{2^{k}}
	= 1 + 2\sum_{k=1}^\infty \sum_{j=1}^{k} \frac{1}{2^{k}}\\
&&= 1 + 2\sum_{j=1}^\infty \sum_{k=j}^\infty \frac{1}{2^{k}}
	= 1 + 2\sum_{j=1}^\infty \frac{1}{2^{j-1}}
	= 5
\end{eqnarray}
We thus have for \cref{biggy} 
\begin{equation}
\Delta_k=5\cdot \big(2\ceil{f(k)} +\ceil{g(k)} \big)
\end{equation}
Next, to choose the function $g(k)$, note that the law of the iterated logarithm gives
\begin{flalign}
\lim_{k\to\infty} \frac{|S_k|}{\sqrt{2k\log\log k}}
	= 1.
\end{flalign}
Hence, by choosing $g(k) = \sqrt{2k\log\log k}$, the random variable $|S_k|$ is asymptotically bounded from above by $g(k)$ with probability one.

We consider two cases below, which are exhaustive by \citet[Theorem, page 18]{Har10old}:

\paragraph{Case 1: $f(k) = \Omega(g(k))$.}

For the first case we have:
\begin{equation}
5\cdot (2f(k) + g(k))
	= \Theta(f(k))
\end{equation}
The expected waiting ratio until $A$ pairs have been matched is bounded from above by
\(
A^{-3/2} \sum_{k=1}^A\Theta(f(k)),
\)
where we are using the fact that the expected waiting time for the greedy schedule is given by $A^{3/2}$ (see \cref{prop:optimalbaseline}).
A trivial lower bound for the expected waiting ratio is then given by
\begin{equation}
\frac{1}{A^{3/2}}\sum_{k=1}^A 2 {f(k)}=\frac{1}{A^{3/2}}\sum_{k=1}^A  \Theta(f(k))
\end{equation}
Thus, the expected waiting ratio is given by
\begin{equation}
\beta(A)
	= \Theta\parens*{\frac{1}{A^{3/2}}\sum_{k=1}^A  f(k)}
\end{equation}

Moving to the comparison of matching and waiting ratios, we recall that $u_{\cost}$ and $u_{\wait}$ are decreasing and concave and are of the same order (by assumption). Thus $u=u_{\cost}+u_{\wait}$ is maximized if and only if $\alpha = \Theta(\beta)$.
In turn, this holds if and only if
\begin{flalign}
\label{eqeqeqeqeq}
\sum_{k=1}^{A} \frac{1}{f(k)^{2}}
	&= \Theta \parens*{ \frac{1}{A^{3/2}}\sum_{k=1}^A  f(k) }
\intertext{or, equivalently, if and only if}
\label{eqnice}
\int_1^{A} \frac{1}{f(x)^2} \dd x
	&= \Theta \parens*{ \frac{1}{A^{3/2}}\int_1^{A} f(x) \dd x }
\end{flalign}
where the asymptotic passage from summation to integration \textendash\ \ie from \cref{eqeqeqeqeq} to \cref{eqnice} \textendash\ is made precise in \cref{app:tech}.

We shall show that $f(x)=\Theta(\sqrt x (\log x)^{1/3})$ is the unique solution to \cref{eqnice} up to order.
To simplify notation, let $f(x)=\sqrt x (\log x)^{1/3}$, so the \ac{LHS} of \cref{eqnice} becomes    
\begin{equation}
\int_1^A \frac{1}{x(\log x)^{2/3}}
	= 3(\log A)^{1/3} + c
\end{equation}
where $c$ is uniformly bounded and independent of $A$.
Next, focusing on the \acs{RHS} of \cref{eqnice}, we get
\begin{flalign}
\frac{1}{A^{3/2}}\int_1^A \sqrt x (\log x)^{1/3} \dd x
	&= (\log A)^{1/3} - \frac{1}{A^{3/2}}\int_1^A x^{3/2} \frac{1}{3x (\log x)^{2/3}} \dd x
	\notag\\
	&= (\log A)^{1/3} -\frac{1}{A^{3/2}} \underbrace{\int_1^A \sqrt x \frac{1}{3 (\log x)^{2/3}} \dd x}_{= o\big(\int_1^A \sqrt{x} \cdot (\log x)^{1/3} \dd x \big)}
	\notag\\
	&= \Theta((\log A)^{1/3})
\end{flalign}
Our uniqueness claim follows by noting that the \ac{LHS} of \cref{eqnice} is decreasing in $f(x)$ (in orders of magnitude of the upper bound of the integral) while the \ac{RHS} is increasing in $f(x)$.

\paragraph{Case 2: $f(k) = o(g(k))$.}
For the second  case, assume that $f(k) = o(g(k))$.
This implies for the matching cost that%
\footnote{Formally, for $\tau \geq e$, the integrand is not well-defined, but the Cauchy principal value of the integral remains finite, and this is the value we are using for $\tau\leq e$.
This issue could be side-stepped by shifting the lower limit of the integral to a higher value, but we do not do so in order to simplify the presentation.}
\begin{equation}\label{asdfadsvda}
\int_{1}^{A} \frac{1}{f(\tau)^2} \dd\tau
	= \omega\parens*{\int_{1}^{A} \frac{1}{g(\tau)^{2}} \dd\tau}
\end{equation}
The integral on the \ac{RHS} of \cref{asdfadsvda} can then be bounded from below as follows
	\begin{equation}\label{asdfadsvdas}
\int_{1}^{A} \frac{1}{g(\tau)^{2}} \dd\tau
	= \int_{1}^{A} \frac{1}{4\tau\log\log\tau} \dd\tau
	= \omega\parens*{ \int_{1}^{A} \frac{1}{4 \tau (\log \tau)^{2/3}} \dd\tau }
\end{equation}
For the integral on the \ac{RHS} of \cref{asdfadsvdas} we have
\begin{equation}
\int_{1}^{A} \frac{1}{4 \tau (\log \tau)^{2/3}}d\tau
 	= \Theta \big ((\log A)^{1/3}\big ),
\end{equation}
Hence, combining these last approximations, we finally get
\begin{flalign}
\int_{1}^{A} \frac{1}{f(\tau)^{2}} \dd\tau
	&= \omega\big( (\log A)^{1/3} \big).
\end{flalign}
Thus any solution satisfying $f(k) = o(g(k))$ (Case 2) has expected matching cost that is $\omega(1)$ relative to the optimal solution.
This completes the proof that $f(k)=\Theta(\sqrt x (\log x)^{1/3})$ is the unique optimal \aclp{CS} for the balanced social planner.
\end{proof}

Note that, up to logarithmic factors, the balanced \acl{CS} is  close to the \acl{CS} $\CS_{\gamma=1/2}$ which signifies a first-order phase transition for the expected matching ratio.
As discussed earlier, $\CS_{\gamma=1/2}$ only signifies a second-order phase transition for the expected waiting ratio, thus explaining the gap between $\CS_{\balanced}$ and $\CS_{\gamma=1/2}$.
In practice however, $\CS_{\gamma=1/2}$ seems to be a reasonable approximation for a balanced social planner.


\section{Generalization}
\label{sec:generalizations}

So far, we focused our attention on dynamic clearing games with exponentially distributed match costs. We shall show below that the developed techniques can also be used to study a more abstract model. 
Rather than modeling match costs directly by defining the distribution of each potential match cost, $w_{ij}$, we take a macroscopic viewpoint and posit that the cost of matching a couple 
depends on the number of clients and providers currently in the market. This cost
may be the expected cost of matching the cheapest couple or a cost associated to market making more generally.
In practice, this cost can be learned from past data on clearing events. We shall thus focus our generalization on identifying breaking points and their respective consequences for different cost regimes. 

Write $\NBoys(\tau)= \nBoys(\tau)-A(\tau)$ for the number of clients in the market at time $\tau$ and $\NGirls(\tau)= \nGirls(\tau)-A(\tau)$ for the number of providers respectively.
Then, the expected cost can be written w.l.o.g. as
\begin{equation}
\g(\NBoys,\NGirls)
\end{equation}
where $\g\from \mathbb R^+ \times \mathbb R^+ \to \mathbb R^+$ is a non-increasing function (in either argument).\footnote{We define the function $\g$ on the real numbers, but note that it is only the values on $\mathbb N \times \mathbb N$ which enter the analysis of \aclp{CS}.}
Intuitively, $\g$ determines how the expected minimum cost of matching decreases as more players coexist in the market.
For example, if $\g(\NBoys,\NGirls)=\frac{1}{\NBoys \cdot \NGirls}$, we revert to the previous analysis resulting from exponentially distributed match costs. This is the case since the expected minimum of $\NBoys \cdot \NGirls$ independent $\exp(1)$ random variables is equal to $\frac{1}{\NBoys \cdot \NGirls}$.

In view of this, it stands to reason that the asymptotic behavior of the market will be captured by the rate at which the expected minimum matching cost $\g(\NBoys,\NGirls)$ vanishes as a function of $\NBoys,\NGirls\to\infty$.
\Cref{prop:general} below makes this intuition precise and identifies a specific threshold beyond which it \emph{is} possible to get a `free lunch'. We restrict our analysis to the case $\delta>1$ to guarantee that the patient \acl{CS} has finite expected matching cost, \ie $\cost_{\patient}<\infty$.

\begin{theorem}
\label{prop:general}
Suppose that the expected minimum matching cost decays as $\g(\Theta(x),\Theta(x)) = \Theta(1/x^{\delta})$
for some $\delta>1$.%
Then:
\begin{enumerate}
[\quad\upshape(\itshape i\hspace*{1pt}\upshape)]
\item
For $1<\delta\leq 2$ there is no `free lunch'. In particular, the critical rate clearing schedule, that is, the clearing schedule with expected matching ratio $\Theta(\log (A))$, is given by
$\CS_{\gamma=1/\delta}$.
\item
For $\delta> 2$, `free lunch' exists. In particular, the \aclp{CS}  $\CS_\gamma$ with $\gamma\in (\frac{1}{\delta},\frac{1}{2}]$ guarantee that the expected matching and waiting ratios are both finite.
\end{enumerate}
\end{theorem}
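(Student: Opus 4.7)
The plan is to extend the matching-cost analysis of \cref{THM:COST,THM:WAIT} to the abstract cost function $\g$, using only its monotonicity and the diagonal asymptotics $\g(\Theta(x),\Theta(x)) = \Theta(x^{-\delta})$. The waiting-ratio analysis of \cref{THM:WAIT} carries over verbatim: it depends solely on the Poisson arrival process and the Wiener-process approximation of $S_\tau$, not on the form of $\g$. Thus for the family $\CS_\gamma$ with $f(k) = \Theta(k^\gamma)$, the waiting ratio is $\Theta(1)$ for $\gamma \leq 1/2$ and $\Theta(\tau^{\gamma-1/2})$ for $\gamma > 1/2$ exactly as before, and the only new ingredient is the matching-cost estimate under the generalized $\g$.

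To evaluate the per-match cost at the $k$-th matching event, note that the short side contains $\Theta(k^\gamma)$ agents by construction, while the long side contains $\Theta(k^\gamma + |S_{\tau(k)}|)$, with $|S_{\tau(k)}| = O(\sqrt{k})$ by the law of the iterated logarithm (precisely as in the proof of \cref{prop:socialplanner}). Monotonicity of $\g$ then yields the sandwich $\g(L,L) \leq \g(S,L) \leq \g(S,S)$, where $S$ and $L$ denote the short- and long-side sizes. For $\gamma \geq 1/2$, both $S$ and $L$ are $\Theta(k^\gamma)$, so the per-match cost is $\Theta(k^{-\gamma\delta})$; for $\gamma < 1/2$, the per-match cost lies in the interval $[\Theta(k^{-\delta/2}),\Theta(k^{-\gamma\delta})]$. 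Summing over $k \leq A$ then gives the total matching cost and hence $\alpha_\gamma$.

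The two regimes follow by case analysis. For case (i), $1 < \delta \leq 2$: any $\gamma > 1/2$ has diverging waiting ratio, while any $\gamma \leq 1/2$ has per-match cost at least $\Theta(k^{-\delta/2})$ by the lower sandwich bound, so the matching ratio is $\Omega(A^{1-\delta/2})$ (or $\Omega(\log A)$ for $\delta = 2$)---unbounded---ruling out free lunch. The critical rate $\gamma = 1/\delta \geq 1/2$ lies in the symmetric regime and produces per-match cost $\Theta(k^{-1})$, summing to $\Theta(\log A)$, as claimed. For case (ii), $\delta > 2$: the interval $(1/\delta, 1/2]$ is nonempty, and any $\gamma$ in it satisfies $\beta_\gamma = \Theta(1)$ together with the upper sandwich bound $\Theta(k^{-\gamma\delta})$ with $\gamma\delta > 1$; hence the sum converges and $\alpha_\gamma = \Theta(1)$, exhibiting a free lunch.

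The main obstacle will be the asymmetric regime $\gamma < 1/2$ in case (i), where the diagonal assumption on $\g$ leaves a genuine gap between the upper and lower sandwich bounds. The required lower bound on the per-match cost relies on monotonicity of $\g$ combined with the expected magnitude $\mathbb{E}[|S_{\tau(k)}|] = \Theta(\sqrt{k})$, which in turn is obtained via the Jensen- and martingale-type concentration arguments already deployed in the proof of \cref{prop:socialplanner}. Once this lower bound is in place, the rest of the case analysis on $\delta$ versus $2$ reduces to comparing power-law sums and is essentially bookkeeping.
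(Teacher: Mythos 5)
Your proposal is correct and follows essentially the same route as the paper: both arguments reduce the matching-cost estimate to the diagonal asymptotics of $\g$ via its monotonicity, control the long side through $\exof{t(k,\cdot)}=\bigoh(k)$ and $\exof{\abs{S_t}}=\Theta(\sqrt{t})$ with Jensen-type bounds, identify the critical exponent $\gamma=1/\delta$, and import the waiting-ratio results of \cref{THM:WAIT} unchanged. Your explicit sandwich $\g(L,L)\leq\g(S,L)\leq\g(S,S)$ is a clean repackaging of the paper's separate upper bound and Markov-conditioned lower bound, but it is not a different method.
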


\begin{proof}[Proof of \cref{prop:general}.] 
We first consider the upper bound.
Given $g(\Theta(x),\Theta(x))=\Theta\left(\frac{1}{x^\delta}\right )$ and since $\g$ is increasing in both arguments we have:
\begin{flalign}
 \exof*{\sum_{k=1}^{A} {g(\NBoys,\NGirls)}\given \min\{\NBoys,\NGirls\}=\floor{k^{1/\delta}} }
& \leq   \sum_{k=1}^{A} {g(\floor{k^{1/\delta}},\floor{k^{1/\delta}})}
\notag  \\
&= \Theta\left( \sum_{k=1}^{A}\frac{1}{k}\right)   \:= \: \Theta(\log A )
\end{flalign}
where the last inequality follows from the bounds for the harmonic series.
Thus, given the optimal clearing schedule has finite matching  cost, the expected matching ratio is smaller than $O(\log A+1)$.

For the lower bound note that $\mathbb E [t(k,k^{1/\delta})]<10k$ by similar arguments as in  \cref{qwerfwe} and by recalling that $\delta> 1$. 
Thus,  combining Jensen's inequality ($\frac{1}{x}$ is convex) with Markov's inequality, \cref{lemma_randomwalkonZdistancefromzero}, and recalling that $\delta\leq 2$ we get:

\begin{flalign}
&\sum_{k=1}^{A} \exof*{ {g(\NBoys,\NGirls) }\given \min\{\NBoys,\NGirls\}=\floor{k^{1/\delta}}}
 \notag\\
&=\sum_{k=1}^{A}  \exof*{ {g(\NBoys,\NGirls) }\given \min\{\NBoys,\NGirls\}=\floor{k^{1/\delta}},\: t(k,k^{1/\delta}) <20k} \cdot \mathbb P \left[t(k,k^{1/\delta}) <20k\right]
 \notag\\
&+ \sum_{k=1}^{A}   \exof*{ {g(\NBoys,\NGirls) }\given\min\{\NBoys,\NGirls\}=\floor{k^{1/\delta}},\: t(k,k^{1/\delta}) >20k}\cdot \mathbb P [t(k,k^{1/\delta}) >20k]
 \notag\\
 &\geq \sum_{k=1}^{A}  \exof*{ {g( \NBoys,\NGirls)}\given \min\{\NBoys,\NGirls\}=\floor{k^{1/\delta}},\: t(k,k^{1/\delta}) <20k} \cdot  \frac{1}{2}
 \notag\\
	&=\Theta\left ( \sum_{k=1}^{A}   {g( \floor{k^{1/\delta}},\floor{k^{1/\delta}}) }  \right )
	= \Theta\left( \sum_{k=1}^{A}\frac{1}{k}\right) = \Theta( \log A )
\end{flalign}
where we used that $\NBoys=\Theta( \NGirls)$  given 
$|S_t|=|\nBoys(t) - \nGirls(t)|=|\NBoys (t)-\NGirls (t)|=O(\sqrt{20k})$ since we are in the case
$t(k,k^{1/\delta}) <20k$ and by the assumption that for all $\lambda, \mu$ we have $\g(\lambda\cdot x,\mu\cdot x)=\Theta(x^\delta)$.
Thus, given the optimal clearing schedule has finite matching cost, the expected matching ratio is  $\Omega (\log A)$, concluding the proof together with the upper bound. 

To summarize, for $\delta\in (1,2]$  the critical rate \acl{CS} is given by $\CS_{\gamma=1/\delta}$. Thus the critical rate clearing schedules are given by $\CS_\gamma$ with $\gamma\in(0,1/2]$ and by \cref{THM:WAIT}, the expected waiting ratio for these schedules is not finite. Note that the case $\delta=2$ is simply \cref{thm:freelunch}. We conclude that there is no `free lunch'.

 (2) By \cref{THM:WAIT} the expected waiting ratio is finite for all \aclp{CS} $\CS_{\gamma}$ with $\gamma\leq\frac{1}{2}$. 
 
 We upper bound the expected matching ratio for the clearing schedule $\CS_\gamma$ for $\gamma>\frac{1}{\delta}$:
 For the upper bound we have with $g(\Theta(x),\Theta(x))=\Theta\left(\frac{1}{x^\delta}\right )$:
\begin{flalign}
 \exof*{\sum_{k=1}^{A} {g(\NBoys,\NGirls)}\given \min\{\NBoys,\NGirls\}=\floor{k^{1/\delta}}}
& \leq   \sum_{k=1}^{A} {g(\floor{k^{1/\delta}},\floor{k^{1/\delta}})}
\notag  \\
&= \Theta\left( \sum_{k=1}^{A}\frac{1}{k^{\gamma\cdot \delta }}\right)   \:= \: \Theta(1 )
\end{flalign}
where the last identity holds since  $\gamma>\frac{1}{\delta}$.
 
Thus,  for given $\delta$ the \aclp{CS} $\CS_\gamma$ with $\gamma\in (\frac{1}{\delta},\frac{1}{2}]$ guarantee that the expected matching ratio and waiting ratio are both finite, \ie free lunch.
\end{proof}


\cref{prop:general}(\emph{i}) extends our previous analysis by showing how the critical rate clearing schedule moves dependent on $\delta$. In fact \cref{prop:general}(\emph{ii}) shows that the conclusion of  \cref{thm:freelunch} does not hold for the regime $\delta>2$, that is, there exists a free lunch and in particular it is achieved by the \aclp{CS}  $\CS_\gamma$ with $\gamma\in (\frac{1}{\delta},\frac{1}{2}]$. 
This is because for quickly decaying matching costs it becomes easier to choose a `good' schedule, and thus the `window of opportunity' is increasing in the derivative of $g$.
One can build intuition for this result by reasoning about market settings that differ in terms of match cost variability: in markets where match costs are generally rather similar, thickening the market by waiting will only lead to a meaningful positive effect in terms of expected match cost reduction when waiting for a long time. By contrast, when match costs vary substantially, match costs reduce in expectation with much less delay, thus making it more likely for a mechanism designer to get a free lunch.
Importantly, the \acl{CS} $\CS_{\greedy}$ is never optimal when dealing with many types, independent of the match cost distribution at hand. 

\section{Discussion}
\label{sec:discussion}

In this paper, we studied the \emph{dynamic clearing game}, where heterogeneous clients and providers arrive uncoordinatedly in order to be matched. 
We studied the trade-off a social planner is facing between two competing objectives: a) to reduce players' \emph{waiting time} before getting matched; and b) to form efficient pairs in order to reduce \emph{matching cost}.

Our analysis of the dynamic clearing game reveals that a multi-objective social planner often faces a substantial trade-off. Starting with the micro-founded model for match costs we showed that there exists no free lunch, that is, there is no clearing schedule that is approximately optimal in terms of both waiting time and matching cost. We identified a unique breaking point where a stark reduction in matching cost compared to a stark increase in waiting cost occurs. 
In line with recent works by \citet{Ash17}, \citet{Ash18b} and many others, we focused on
a concrete class of social welfare functions that weigh costs from waiting versus matching on a comparable scale and identify the optimal \acl{CS}, namely, the \acl{CS} that matches the $k$-{th} couple when $\Theta(\sqrt k (\log k)^{1/3})$ players are on the short side of the market. 

Generalizing the model, we abstract away from modeling match costs directly and take a macroscopic viewpoint. Positing that the cost of matching a couple depends on the number of clients and providers currently in the market we identify two regimes. One, where no free lunch continues to hold, the other, where there is a window of opportunity to be optimal along both dimensions, that is free lunch.

There are multiple directions in which our analysis could be extended. Perhaps the most evident avenue for future research is to model market participation behavior game-theoretically, which would lead to new strategic considerations and probably induce other matchings (see, \eg \citealt{Bac18}). This analysis could be pursued in more applied contexts, for instance relating to our motivating example of a labor market with a central employment bureau, where waiting costs could be interpreted as benefits payable by the bureau. An unemployed worker might forgo some of these benefits by (repeatedly) rejecting matches. This is the case because longer waiting, even though borne out of strategic behavior, may improve the match quality (reducing matching cost).

A second route for further investigation is to enlarge the options of the social planner in terms of clearing schedules. For one, the social planner could be learning from market observations about the distribution of match costs, which incidentally we may also allow to follow other, more general classes of distributions. This would allow the social planner to formulate more sophisticated clearing schedules that incorporate match costs between players that are currently in the market. In particular, if the social planner learns that a given agent may be `hard to match', then it might be sensible to match that agent directly and not incur further waiting cost. Furthermore, the social planner may want to match more than one couple at a time. 

The study of dynamic market institutions is clearly fascinating, with tremendous scope for progress in (old and new) applications, where research has only just started.
Our contribution has been to go beyond binary match values, and to identify breaking points under incomplete information. 
We hope that our framework is able to provide fertile ground for further research, both theoretical and applied to real-world market contexts, in particular as regards thinking about whether the kinds of breaking points we describe are relevant in the optimal design of such markets.



\appendix

\section*{Appendix}

\section{Proof of Theorem \ref*{THM:COST}}
\label{app:match}

Before turning to the proof we introduce the following definition and lemmas.

\begin{definition}
 Let $X_1,X_2, \ldots$ be iid random variables with $\mathbb P [X_i=1]=\mathbb P [X_i=-1]=\frac{1}{2}$.
 \begin{itemize}
 \item Let $S_k=\sum_{i=1}^k X_i$.
 \item Let  $t(k,C)$ be the stopping time  for the event that for the $k$-th time at least $C$ clients and $C$ providers are in the market, assuming  that every time this is the case one client and one provider are removed. 
\end{itemize}
\end{definition}

\begin{lemma}
\label{lemma_minofexpvariables}
Let $w_{ij}\sim \exp(\rate_j)$ for $j=1,2,\dotsc,N$, be a family of independent exponentially distributed random variables.
Then 
\begin{equation}
\min \{w_{i1},w_{i2},\ldots,w_{iN}\}
	\sim \exp\parens*{\sum_{j=1}^N \rate_j}.
\end{equation}
In particular, if for all $j$, $\rate_j=1$, then $\mathbb E [ \min_j w_{ij}]=\frac{1}{N}$.
\end{lemma}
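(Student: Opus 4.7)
The plan is to establish the distributional claim by computing the survival (complementary cumulative distribution) function of the minimum, since the tails of independent exponentials multiply nicely.

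First, I would fix $t\geq 0$ and write
\[
\probof*{\min\{w_{i1},\dotsc,w_{iN}\} > t}
    = \probof*{w_{i1} > t,\, w_{i2} > t,\, \dotsc,\, w_{iN} > t}.
\]
By the assumed independence of the $w_{ij}$'s, this joint probability factorises into the product $\prod_{j=1}^{N} \probof{w_{ij} > t}$. Each factor is the exponential survival function $\probof{w_{ij} > t} = e^{-\rate_j t}$, so the product collapses to $\exp\bigl(-\bigl(\sum_{j=1}^{N} \rate_j\bigr) t\bigr)$. Recognising this expression as the survival function of an $\exp\bigl(\sum_j \rate_j\bigr)$ distribution yields the first claim, since a non-negative random variable is determined by its survival function.

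For the second assertion, I would specialise to the case $\rate_j = 1$ for all $j$, so that $\min_j w_{ij} \sim \exp(N)$. The expected value of an $\exp(\kappa)$ random variable is $1/\kappa$, which gives $\exof{\min_j w_{ij}} = 1/N$.

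There is no real obstacle here: the entire argument is a direct consequence of independence together with the memoryless/exponential form of the tails. The only thing to be careful about is stating independence clearly (the $w_{ij}$ are independent across $j$ for fixed $i$), and making the simple observation that a distribution on $[0,\infty)$ is uniquely determined by its survival function so that no further computation of densities is needed.
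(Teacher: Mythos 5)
Your argument is correct and is essentially the same as the paper's own proof: both factor the tail probability of the minimum using independence and recognise the resulting product $e^{-(\sum_j \rate_j)t}$ as the exponential survival function, then specialise to $\rate_j=1$ for the expectation. No gaps.
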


\begin{proof}
This proof is standard but we repeat it for the sake of completeness. The random variable $w_{ij}$  has cumulative distribution function
\begin{equation}
F_{w_{ij}}
	= \probof{w_{ij}\leq x}
	= 1-e^{-\rate_j x}
	\quad
	\text{for all $x>0$ and all $j=1,2,\dotsc,N$.}
\end{equation}
Now, define the random variable $Y=\min\{w_{i1},w_{i2},\ldots,w_{iN}\}$.
Then, the cumulative distribution function of $Y$ is
\begin{flalign}
F_Y(y)
	&= \probof{Y\leq y}
	\notag\\
	&= 1 - \probof{Y\geq y}
	\notag\\
	&= 1 - \probof{\min\{w_{i1},w_{i2},\ldots,w_{iN}\}\geq y}
	\notag\\
	&= 1 - \probof{w_{i1}\geq y} \cdot \probof{w_{i2}\geq y}\cdot \ldots \cdot \probof{w_{iN}\geq y}
	\notag\\
	&= 1 - e^{-\rate_1 y} \cdot  e^{-\rate_2y} \cdot \dotsm \cdot  e^{-\rate_Ny}
	\notag\\
	&= 1-e^{-\sum_{j=1}^N\rate_j y}\:\:\:\:\:\: y>0
\end{flalign} 
The latter cumulative distribution function is that of an exponential variable with parameter $\sum_{j=1}^N\rate_j$.
\end{proof}


\begin{lemma}
\label{lemma_randomwalkonZdistancefromzero} For $S_k$  defined as above we have:\footnote{Note that $\lim_{k\to\infty} \mathbb E [|S_k|]=\sqrt{\frac{2}{\pi}}\cdot \sqrt {k}$ \citep{Pet56}.}
\begin{equation}
0.67\cdot  \sqrt k \: \lessapprox \: \frac{2\pi}{e^2}\cdot \sqrt{\frac{2}{\pi}}\cdot \sqrt {k}\:\:
	\leq \exof{|S_k|}
	\leq \frac{e}{\sqrt{\pi}} \cdot \sqrt{\frac{2}{\pi}}\cdot \sqrt {k}
	\lessapprox 1.23 \cdot \sqrt k
\end{equation}
\end{lemma}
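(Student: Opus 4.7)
The proof anchors on the exact identity $\mathbb{E}[|S_{2n}|] = n\binom{2n}{n}\,2^{1-2n}$ for even $k = 2n$. To derive it, I would write $S_{2n} = 2H - 2n$ with $H \sim \text{Binomial}(2n,1/2)$, so that
\[
\mathbb{E}[|S_{2n}|] = 2\cdot 2^{-2n}\sum_{j=0}^{2n} |j-n|\binom{2n}{j},
\]
and then invoke the combinatorial identity $\sum_{j=0}^{2n}|j-n|\binom{2n}{j} = n\binom{2n}{n}$. The identity follows by splitting the sum at $j=n$ (using the symmetry $\binom{2n}{j}=\binom{2n}{2n-j}$) and evaluating $\sum_{j\geq n+1}(j-n)\binom{2n}{j}$ via $j\binom{2n}{j}=2n\binom{2n-1}{j-1}$ combined with $\sum_{i=n}^{2n-1}\binom{2n-1}{i}=2^{2n-2}$.

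Next, to convert this into explicit $\sqrt{k}$ bounds, I would apply the crude Stirling inequalities
\[
\sqrt{2\pi}\,n^{n+1/2}e^{-n} \leq n! \leq e\,n^{n+1/2}e^{-n}, \qquad n\geq 1,
\]
to $\binom{2n}{n}=(2n)!/(n!)^{2}$; after cancellation this yields
\[
\frac{2\sqrt{\pi}}{e^{2}}\cdot\frac{4^{n}}{\sqrt{n}} \leq \binom{2n}{n} \leq \frac{e}{\sqrt{2}\,\pi}\cdot\frac{4^{n}}{\sqrt{n}}.
\]
Plugging these into $n\binom{2n}{n}\,2^{1-2n}$ and rewriting in terms of $k=2n$ delivers the two bounds of the lemma: the lower one simplifies to $\tfrac{2\pi}{e^{2}}\sqrt{2/\pi}\,\sqrt{k}$ (matching exactly), while the upper one becomes $\tfrac{e}{\pi}\sqrt{k}$, which is strictly smaller than the claimed $\tfrac{e}{\sqrt{\pi}}\sqrt{2/\pi}\,\sqrt{k}$ and hence implies it.

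For odd $k=2n+1$, I would use one-step conditioning: since $|s\pm 1|$ averages to $|s|$ whenever $s\neq 0$ and equals $1$ when $s=0$,
\[
\mathbb{E}[|S_{2n+1}|] = \mathbb{E}[|S_{2n}|] + \mathbb{P}[S_{2n}=0] = \mathbb{E}[|S_{2n}|] + \binom{2n}{n}\,2^{-2n}.
\]
The correction term is of order $1/\sqrt{n}$, hence of lower order than $\sqrt{k}$, and is easily absorbed by the slack in the even-case bounds; this extends the two-sided estimate to all $k\geq 1$. The only mild obstacle is the algebraic bookkeeping needed to make Stirling's constants multiply out to exactly the prefactors $\tfrac{2\pi}{e^{2}}\sqrt{2/\pi}$ and $\tfrac{e}{\sqrt{\pi}}\sqrt{2/\pi}$ appearing in the claim; the combinatorial identity and the parity reduction are entirely routine.
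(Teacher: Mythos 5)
Your proposal is correct and takes essentially the same route as the paper: the exact closed form $\mathbb{E}[|S_{2n}|]=n\binom{2n}{n}2^{1-2n}$ (which the paper quotes from the literature rather than rederiving) combined with the same two-sided Stirling inequalities $\sqrt{2\pi}\,n^{n+1/2}e^{-n}\leq n!\leq e\,n^{n+1/2}e^{-n}$, producing exactly the constants $\tfrac{2\pi}{e^{2}}\sqrt{2/\pi}$ and (after the odd-case slack) $\tfrac{e}{\sqrt{\pi}}\sqrt{2/\pi}$. The only minor deviation is the odd case, where you condition forward from time $2n$ and absorb the $\mathbb{P}[S_{2n}=0]=\Theta(n^{-1/2})$ correction, while the paper instead uses the exact identity $\mathbb{E}[|S_{2k-1}|]=\mathbb{E}[|S_{2k}|]$; both arguments are sound.
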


\begin{proof}
The starting point of our proof is an intermediate result in the proof of the limit of the expected absolute value of the 1-d random walk, which is detailed in \citet[Equations 29a and 29b]{Hiz11} and is based on combinatorial arguments via the binomial distribution:
\begin{flalign}
\mathbb E[|S_k|]
	&
	= \begin{cases}
		\displaystyle
		\frac{1}{2^{k-2}}\frac{k}{2}{\binom{k-1}{k/2}}
		= \frac{k}{2^{k}}\frac{k!}{[(k/2)!]^2}
			&\quad
			\text{for $k$ even},
		\\[2em]
		\displaystyle
		\frac{1}{2^{k-1}}\frac{k+1}{2}{\binom{k}{(k+1)/2}}
		= \frac{k+1}{2^{k+1}}\frac{(k+1)!}{[((k+1)/2)!]^{2}}
			&\quad
			\text{for $k$ odd}.
	\end{cases}
\end{flalign}
Since $\mathbb E[|S_{2k}|]=\mathbb E[|S_{2k-1}|]$ it  suffices to analyze the case where $k$ is even. 
To that end, we will use Stirling's formula to bound $k!$ from above and below as
\begin{equation}
\sqrt{2\pi}\cdot k^{k+1/2}\cdot e^{-k}
	\leq k!
	\leq e \cdot k^{k+1/2} \cdot e^{-k}
\end{equation}
For $k$ even, we may bound $\abs{S_k}$ from above as:
\begin{flalign}
\mathbb E[|S_k|]=
	\frac{k}{2^{k}} \frac{k!}{[(k/2)!]^2}
	&\leq \frac{k}{2^{k}} \frac{e \cdot k^{k+\frac{1}{2}} \cdot e^{-k}}{{2\pi}\cdot (k/2)^{k+1}\cdot e^{-k}}
	= \frac{e}{\sqrt{2\pi}} \cdot \sqrt{\frac{2}{\pi}}\cdot \sqrt k
\end{flalign}
Next, we lower bound $|S_k|$ for $k$ even:
\begin{flalign}
\mathbb E[|S_k|]
	= \frac{k}{2^{k}}\frac{k!}{[(k/2)!]^2}
	&\geq \frac{k}{2^{k}} \frac{\sqrt{2\pi}\cdot k^{k+1/2} \cdot e^{-k}}{e^2 \cdot (k/2)^{k+1} \cdot e^{-k}}
	= \frac{2\pi}{e^2}\cdot \sqrt{\frac{2}{\pi}}\cdot \sqrt k
\end{flalign}
This concludes  the proof for $k$ even. For $k$ odd we have with the observation that $|S_k|=|S_{k+1}|$:
\begin{subequations}
\begin{alignat}{2}
\frac{2\pi}{e^2}\cdot \sqrt{\frac{2}{\pi}}\cdot \sqrt {k}
	&\:\leq \frac{2\pi}{e^2}\cdot \sqrt{\frac{2}{\pi}}\cdot \sqrt {2\ceil{k/2}}
	&\:\:\leq \exof{\abs{S_{k+1}}}
	&\:\:=\exof{\abs{S_{k}}}
\intertext{and}
\frac{e}{\sqrt{\pi}} \cdot \sqrt{\frac{2}{\pi}}\cdot \sqrt{k}
	&\:\geq \frac{e}{\sqrt{\pi}} \cdot \sqrt{\frac{2}{\pi}}\cdot \frac{1}{\sqrt 2} \cdot \sqrt {2\ceil{k/2}}
	&\:\:\geq \exof{\abs{S_{k+1}}}
	&\:\:=\exof{\abs{S_k}}
\end{alignat}
\end{subequations}
\end{proof}

For the sake of limiting notation the proposition and proof are stated for the \aclp{CS} with $f(k)=\ceil{k^\gamma}$ rather than for $\Theta (f)$. Adding constant upper and lower bounds is straightforward and thus omitted. Recall that $t(k,f(k))$ is the stopping time for the event that for the $k$-th time at least $f(k)$ clients and $f(k)$ providers are in the market, assuming that every time this is the case one client and one provider are removed.

\begin{proof}[Proof of \cref{THM:COST}.] 
Throughout the proof we shall simplify notation by omitting the fact that some of the matching schedules are defined via the ceiling of functions mapping to $\mathbb R^+$ (e.g., $\ceil{k^\gamma}$). The results are not changed by the omission since match costs are never underestimated and overestimated by very little. Further,  while they are stated together in the proposition, we study the  \aclp{CS} $\CS{\gamma=0}$ and $\CS_{0<\gamma<1/2}$ separately since they require different arguments.

\Paragraph{First come, first served \textpar{$\CS_{\FIFO}}$} In \ac{FCFS}  the cost of each match is the expectation of a single match cost, that is $\rate$. After $A$ matches have occurred, the expected incurred cost is $\rate A$. Thus, given the patient \acl{CS} has cost bounded above by $\frac{\pi^2}{6 \underline \rate}$, the expected matching ratio is equal to $\frac{\rate\cdot A}{\pi^2/(6 \underline \rate)}$.

Before stating the proofs for the other results recall from the proof of \cref{prop:optimalbaseline}, that the exponential distribution is closed under scaling. We shall thus simplify notation and assume that for all $i,j$ $w_{ij}\sim \exp(1)$. Note that, for lower bounds the scaling factor $\frac{1}{\overline \rate}$ needs to be applied and for upper bounds the scaling $\frac{1}{\underline\rate}$ needs to be applied. But note that those scaling factors are constant with respect to $\tau$ (and thus $A$) and therefore do not influence the orders of the limiting results.

\Paragraph{Greedy matching \textpar{$\CS_{\greedy}$}} The $k$-th match happens when the minimum of the number of clients and providers who already arrived to the market is $k$, that is, at time $t(k,1)$. The expected weight of the 
$k$-th match depends on the number of players currently present on the long side of the market (since on the short side there is only one agent).
 This random variable is given by $|S_{t(k,1)}|+1$.
By \cref {lemma_minofexpvariables} the expected weight thus is $\mathbb E [\frac{1}{|S_{t(k,1)}|+1}]$. 
The first $A$ matches thus  have an expected cost of 
\begin{equation}
\label{equationgreedy}
\mathbb E[ \sum_{k=1}^A  \frac{1}{|S_{t(k,1)}|+1}].
\end{equation}
Given that we study fixed $A$ (the number of matches that)  
we have:\footnote{Note that $t$ (the total number of client and providers who have arrived to the market) depends  on $A$ (and vice versa). Therefore, \cite{Wal44}'s equation does not apply and thus the route of inquiry to study the matching cost at some continuous time $\tau$ does not  work since we could not   interchange summation and expectation.}
\begin{equation}
\exof*{\sum_{k=1}^{A}  \frac{1}{|S_{t(k,1)}|+1}}
	= \sum_{k=1}^{A} \mathbb E[ \frac{1}{|S_{t(k,1)}|+1}]
\end{equation}
 Next, by Jensen's inequality ($\frac{1}{x}$ is convex) we have:
\begin{equation}
\sum_{k=1}^{A} \mathbb E [\frac{1}{|S_{t(k,1)}|+1}] \: > \: \sum_{k=1}^A\frac{1}{\mathbb E[|S_{t(k,1)}|+1]}
\:=\:\sum_{k=1}^A\frac{1}{\mathbb E[|S_{t(k,1)}|]+1} \label{helpme3}
\end{equation}
We shall now approximate $\mathbb E [t(k,1)]$. By \cref{lemma_randomwalkonZdistancefromzero} we have $\mathbb E [S_t]<1.23\sqrt t$. Thus the short side of the market has 
$\frac{t-1.23\sqrt t}{2}$ agents. Setting $k=\frac{t-1.23\sqrt t}{2}$ and solving the quadratic equation we find the crude upper bound for the expectation:\footnote{ 
We solve
$k=\frac{t-1.23\sqrt t}{2}$.
Setting $t=u^2$ and rearranging we solve  quadratic equation
$ u^2-u-2k\stackrel{!}{=}0$.
The solutions are:
\begin{equation*}
u_{1,2}=\frac{1.23\pm \sqrt {1.23^2+8k}}{2}
\end{equation*}
Given the variable transformation the positive solution is selected.}
\begin{equation}
\label{eqhelpeq}
\mathbb E [t(k,1)]=\big (\frac{1.23+ \sqrt {1.23^2+8k}}{2}\big )^2<\frac{3}{4}+2k+2\sqrt k <5k
\end{equation}
Returning to \cref{helpme3} we have with \cref{lemma_randomwalkonZdistancefromzero}:
\begin{align}
\sum_{k=1}^A\frac{1}{\mathbb E[|S_{t(k,1)}|]+1}
> \sum_{k=1}^A\frac{1}{\mathbb E[|S_{5k}|]+1}
&>\frac{1}{1.23} \sum_{k=1}^{A}\frac{1}{\sqrt {5k} +1}
\\
>\frac{1}{1.23} A\cdot\frac{1}{\sqrt {{5A}}+1}
&>  A\cdot\frac{1}{5\sqrt {{A}}}
= \frac{\sqrt A}{5}
\end{align}
Thus, given the optimal schedule has $\cost_{\patient}(A)\leq \frac{\pi^2}{6 \underline \rate}$, the expected matching ratio is lower bounded by $ \frac{\sqrt {A}}{5\pi^2/(6 \underline \rate)}$.

\Paragraph{Subcritical matching \textpar{$\CS_{\gamma =0}$}}
We shall fix the clearing schedule such that it matches a couple every time some fixed $C\in \mathbb N$ players are on the short side of the market ($N-A=C$) and note that it belongs to the family of \aclp{CS} $\CS_{\gamma=0}$.

 Next, note that $t(k,C)=t(1,C)+t(k-1,1)$, since we assume that every time at least $C$ clients and $C$ providers are in the market exactly one client and one provider match and thus leave the market. Similarly to the proof of \cref{THM:COST}$(\CS_{\greedy})$ we can bound $\mathbb E [t(1,C)]$ from above by noting that it is equal to $\mathbb E [t(C,1)]$. Thus 
 \begin{equation}
\label{qwerfwe}
 \mathbb E [t(k,C)]=\mathbb E [t(C,1)]+\mathbb E [t(k-1,1)] <5C+5k
\end{equation}
 
 
 With the latter and, as above by Jensen's inequality ($\frac{1}{x}$ is convex) and \cref{lemma_randomwalkonZdistancefromzero} we have for the expected matching cost:
\begin{flalign}
\label{hel123}
\exof{\sum_{k=1}^{A} \frac{1}{(C+|S_{t(k,C)}|)C}}
	&\geq \sum_{k=1}^{A}\frac{1}{(C+\mathbb E[|S_{t(k,C)}|])C}
	\geq \sum_{k=1}^{A}\frac{1}{(C+\mathbb E[|S_{5C+5k}|])C}
	\notag\\
	&\geq \frac{1}{1.23}\sum_{k=1}^{A}\frac{1}{(C+\sqrt {5C+5k})C}
	\geq \frac{A}{1.23}\cdot\frac{1}{(C+\sqrt {5C+5A})C}
	\notag\\
	&= \frac{1}{1.23\cdot C}\cdot\frac{A-C}{\sqrt {5C+5A}+C}
	= \Omega (\sqrt A)
\end{flalign}
Thus, given the optimal clearing schedule has finite cost the expected matching ratio is $\alpha(A) =\Omega (\sqrt {A})$.

The second part of the assertion follows by observing:
\begin{equation}
 \mathbb E [\sum_{k=1}^{A} \frac{1}{(C+|S_{t(k,C)}|)C}] 
	< \frac{1}{C}\mathbb E [\sum_{k=1}^{A} \frac{1}{1+|S_{t(k,1)}|}]
\end{equation}

\Paragraph{Subcritical matching \textpar{$\CS_{0<\gamma < 1/2}$}}
For the upper bound, by  \cref{lemma_minofexpvariables}, we have:
\begin{align}
\exof{\sum_{k=1}^{A} \frac{1}{( k^\gamma+|S_{t(k,\sqrt k)}|)k^\gamma}}
	&\leq \sum_{k=1}^{A}\frac{1}{k^{2\gamma}} 
	= 1+\sum_{k=2}^{A}\frac{1}{k^{2\gamma}}
	\leq 1+\int_{x=1}^{A}\frac{1}{x^{2\gamma}} \dd x
	\notag\\
	&= 1+\bracks*{\frac{1}{1-2\gamma}x^{1-2\gamma}}_{x=1}^{A}
	\leq 1+  \frac{1}{1-2\gamma} A^{1-2\gamma}
\end{align}
Thus, given the optimal clearing schedule has finite cost, the expected matching ratio is $\alpha (A) =\bigoh(A^{1-2\gamma})$ for $0<\gamma<\frac{1}{2}$.


For the lower bound,   note that $t(k, k^\gamma)<10k$ for $\gamma <1$ by similar arguments as in  \cref{qwerfwe}. Further note that $\mathbb E[|S_t|]$ is strictly increasing in $t$. Thus,  with Jensen's inequality ($\frac{1}{x}$ is convex):
\begin{align}
 &\sum_{k=1}^{A} \mathbb E [\frac{1}{(k^{\gamma}+|S_{t(k,k^\gamma)}|)k^{\gamma}}]
 &>\:& \sum_{k=1}^{A} \frac{1}{(k^{\gamma}+\mathbb E [|S_{10k}|])k^{\gamma}}
 &>\:& \frac{1}{1.23}\sum_{k=1}^{A} \frac{1}{(k^{\gamma}+\sqrt{10k})k^{\gamma}}\notag\\
&  &>\:& \frac{1}{1.23(\sqrt {10} +1)} \sum_{k=1}^{A} \frac{1}{k^{\frac{1}{2}+\gamma}}
  &>\:& \frac{1}{6} \int_{x=1}^{A} \frac{1}{x^{\frac{1}{2}+\gamma}}dx\notag\\
   &&>\:& \frac{1}{6}  [\frac{1}{\frac{1}{2}-\gamma}x^{\frac{1}{2}-\gamma}]_{x=1}^{A}
   &=\:& \Omega (A^{\frac{1}{2}-\gamma})
\end{align}

\Paragraph{Critical matching \textpar{$\CS_{\gamma =1/2}$}}
 For the upper bound, by \cref{lemma_minofexpvariables}, we have:
\begin{equation}
 \mathbb E [\sum_{k=1}^{A} \frac{1}{(\sqrt k+|S_{t(k,\sqrt k)}|)\sqrt k}]
 \: < \:\sum_{k=1}^{A}\frac{1}{k}   \:\leq \: \log A+1 
\end{equation}
where the last inequality follows from the bounds for the harmonic series.
Thus, given the optimal clearing schedule has finite matching cost (lower bounded by $\frac{\log(2)}{\overline \rate}$), the expected matching ratio is smaller than $\frac{\frac{1}{\underline \rate}(\log A+1)}{\log(2)/\overline\rate}$.

For the lower bound note that $\mathbb E [t(k,\sqrt k)]<10k$ by similar arguments as in  \cref{qwerfwe}. Further note that $S_t$ is strictly increasing in $t$. Thus,  with Jensen's inequality ($\frac{1}{x}$ is convex) and \cref{lemma_randomwalkonZdistancefromzero}:
\begin{flalign}
\sum_{k=1}^{A} \mathbb E [\frac{1}{(\sqrt k+|S_{t(k,\sqrt k)}|)\sqrt k}]
	&> \sum_{k=1}^{A} \frac{1}{(\sqrt k+\mathbb E [|S_{10k}|])\sqrt k}
	> \frac{1}{1.23}\sum_{k=1}^{A} \frac{1}{(\sqrt k+\sqrt{10k})\sqrt k}
	\notag\\
	&\geq \frac{1}{6} \sum_{k=1}^{A} \frac{1}{ k}
	> \frac{1}{6}\log A
\end{flalign}
Thus, given the optimal clearing schedule has   $\cost_{\patient}(A)\leq\frac{\pi^2}{6 \underline \rate}$, the expected matching ratio is bounded below by $\frac{\frac{1}{6}\log A}{\pi^2/(6 \underline \rate)}=\frac{\underline \rate }{\pi^2}\cdot \log A$.

\Paragraph{Supercritical matching \textpar{$\CS_{1/2<\gamma\leq1}$}} As above, by Jensen's inequality (since $1/x$ is convex) and \cref{lemma_minofexpvariables} we have:
\begin{equation}
 \mathbb E [ \sum_{k=1}^{A} \frac{1}{(k^\gamma+|S_{t(k,k^\gamma)}|)k^\gamma}]
\: < \: \sum_{k=1}^{A} \mathbb E [\frac{1}{k^\gamma k^\gamma}]
\: = \:\sum_{k=1}^{A}\frac{1}{k^{2\gamma}}\: \rightarrow \: \zeta(2\gamma)
\end{equation}
where $\zeta$ is the Riemann zeta function and is known to converge for $\gamma>\frac{1}{2}$. Given that we are considering a sum with positive summands convergence is from below. Thus, given the optimal clearing schedule has finitematching cost ($\cost_{\patient}(A)\geq \frac{\log(2)}{\overline\rate}$),  the expected matching ratio is bounded from above by $(\overline\rate/\underline{\rate}) \cdot \zeta(2\gamma) / \log2$ for $\frac{1}{2}<\gamma\leq1$.
\end{proof}

\section{Proof of Theorem \ref*{THM:WAIT}}
\label{app:wait}

\begin{proof}[Proof of \cref{THM:WAIT}.]
First note that in order to compare different \aclp{CS} we are interested in the additional waiting time incurred until some number $A$ of couples have been matched. Thus, we consider the waiting time until $\horizon$ for the greedy schedule (the benchmark) and for other schedules the waiting time until $\hat\horizon$
where $\hat\horizon$ is the expected time until under the given schedule the same number of couples have been matched as in the greedy schedule until time $\horizon$.

As in the Proof of \cref{THM:COST} we shall simplify notation by omitting the fact that some of the matching schedules are defined via the ceiling function of functions mapping to $\mathbb R^+$ (e.g., $\ceil{k^\gamma}$). We invite the reader to convince her-or himself that the results are not altered through this simplification.

Let $\tau(k)$ be the moment the $k$-th couple is matched (given a particular \acl{CS}).
We proceed in a case-by-case basis below:

\paragraph{First come, first served \textpar{$\CS_{\FIFO}}$}
It suffices to note that this \aclp{CS} matches players at exactly the same moments as the greedy \aclp{CS}.
The result then follows.

\paragraph{Subcritical and critical matching \textpar{$\CS_{0\leq\gamma \leq 1/2}$}}
We shall study the worst case such \acl{CS} with respect to waiting time. We consider two different parts. In the first part we  wait until at least $\horizon^\gamma$ clients and $\horizon^\gamma$ providers are in the market. The second part then proceeds in the same way as the greedy \acl{CS}, keeping in mind that at all future times $\min\{\nBoys, \nGirls\}$ is exactly $\horizon^\gamma$.
The expected waiting time of the first schedule can be bounded above by the upper bound for the expected time until $\horizon^\gamma$ clients and $\horizon^\gamma$ providers are in the market, that is, $\mathbb E[\tau(5\horizon^\gamma)]=5T^\gamma$ (see \cref{eqhelpeq} in the proof of \cref{THM:COST}) noting that we used the fact that the arrival of agents is governed by a Poisson clock of rate 1. Now, a crude upper bound for the waiting time of the first part of the process is found be assuming that all agents are in the market from the beginning ($\tau=0$), yielding the upper bound $5\horizon^\gamma\cdot 5\horizon^\gamma$.

Note that, the first part of the process takes $\hat{ \horizon}-\horizon$ time. For the remaining second part of the process  the waiting cost is the cost of the greedy schedule ($\frac{2}{3}\horizon^{3/2}$) plus the cost of the \textendash\ in expectation \textendash\ no more than $5T^\gamma$ agents on each side of the market to `remain' for the subsequent periods. Thus the total waiting time is bounded above by:
\begin{equation}
5\horizon^\gamma \cdot 5\horizon^\gamma + \frac{2}{3}\horizon^{3/2}+ 5\horizon^\gamma\cdot \horizon=\Theta (\horizon^{3/2})
\end{equation}
Thus $\beta(\hat{ \horizon}) = (3/2) \, \Theta (\horizon^{3/2}) / \horizon^{3/2} = \Theta(1)$.

\paragraph{Supercritical matching \textpar{$\CS_{1/2<\gamma\leq1}$}}
We first construct a lower bound. Consider the alternative arrival process, where clients and providers alternatingly arrive to the market. Note that for any given \acl{CS} this process incurs lower waiting time. For the \acl{CS} we consider the waiting time of this alternative arrival process is precisely governed by the fact that the $k$-th match takes place when at least $k^\gamma$ players are on the short side of the market.
Further note that $\hat {\horizon} \geq \horizon$. Thus, the waiting time is lower bounded by using the  approximation  by the  Wiener process (by arguments as in \cref{prop:optimalbaseline} and by observing that arrival is governed by a Poisson clock of rate 1):
\begin{equation}
\int_0^{\horizon} 2\tau^\gamma d\tau
	= \frac{2}{1+\gamma}\tau^{1+\gamma}|_0^{\horizon}
	= \Omega (\horizon^{1+\gamma})
\end{equation}

For the upper bound, we construct a \acl{CS} that constitutes an upper bound of the schedule under consideration. For fixed $k$, let $\horizon=\tau(k)$ consider the following \acl{CS}: First wait until there are at least $k^\gamma$ clients and providers in the market, then proceed with the greedy schedule such that at any future point $\min\{clients, providers\}$ in the market is equal to $k^\gamma$. Note that this new schedule has the same total run time as the original schedule, that is, $\hat{ \horizon}$. Further it is evident that the waiting time occurred by the new schedule is greater than the waiting time of the original schedule. By arguments as for $(\CS_{\gamma=0})$ and by the fact that arrival is governed by a Poisson clock of rate 1 we can  upper bound the waiting time by:
 \begin{equation}
\label{eq:upperboundwaithelp}
5\horizon^\gamma \cdot 5\horizon^\gamma
	+ \frac{2}{3}\horizon^{3/2}
	+ 5\horizon^\gamma\cdot \horizon
	= \Theta (\horizon^{1+\gamma})
\end{equation}
since we assumed $\frac{1}{2}<\gamma\leq1$.

The two bounds together show that the waiting time of the originally considered \acl{CS} is $\Theta (\horizon^{1+\gamma})$.Thus $\beta(\hat{ \horizon}) = \frac{\Theta (\horizon^{1+\gamma})}{(2/3) \, \horizon^{3/2}}=\Theta(\horizon^{\gamma-\frac{1}{2}})$.

\paragraph{Patient matching \textpar{$\CS_{\patient}$}}
First note that for the patient schedule $\hat{ \horizon}=\horizon$. The expected waiting time for the patient schedule until time $\horizon$ is given by
\begin{equation}
\exof*{\int_0^{\horizon} \nBoys(\tau) + \nGirls(\tau) \dd\tau }
	= \int_0^{\horizon} \exof{\nBoys(\tau)+\nGirls(\tau)} \dd\tau
\end{equation}
where the latter equality holds  by Tonelli's theorem (by noting that $\nBoys(\tau)+\nGirls(\tau)$ is non-negative). The expectation is with respect to the number of clients and providers and with respect to the arrival times of the agents (governed by a Poisson clock). Again by Tonelli's theorem we can consider the case where the expectation with respect to the arrival times is taken first. Then by the fact  that the arrival of agents is assumed to follow a Poisson clock of rate 1 we have: 
\begin{equation}
\int_0^{\horizon} \mathbb E [ \nBoys(\tau)+\nGirls(\tau)]d\tau
	= \int_0^{\horizon} \floor{\tau} d\tau= \Theta(\horizon^2)
\end{equation}
 Thus $\beta(\hat{ \horizon})=\frac{\Theta( \horizon^2)}{\frac{2}{3}\horizon^{3/2}}=\Theta (\sqrt\horizon)$. 
\end{proof}

\section{Proof of approximation in Proof of \cref{prop:socialplanner}}
\label{app:tech}

\begin{proof}[\textbf{Proof of omitted approximation in Proof of \cref{prop:socialplanner}}]
We begin by approximating the two sums in \cref{eqeqeqeqeq}, \ie
\begin{equation}
\sum_{k=1}^{A} \frac{1}{f(k)^{2}}
	= \Theta \parens*{ \frac{1}{A^{3/2}}\sum_{k=1}^A  f(k) }
\end{equation}
Recalling that $f$ is assumed non-decreasing for large $k$, the summand on the \acl{LHS} is decreasing and
\begin{equation}
\int_0^A \frac{1}{f(x)^{2}} \dd x
	\geq \sum_{k=1}^{A} \frac{1}{f(k)^{2}}
	\geq \int_1^{A+1} \frac{\dd x}{f(x)^2}
\end{equation}
Considering the meaning of $f(k)$ it is without loss of generality to define $f(x)=1$ for $x\in [0,1)$ since the summand $\frac{1}{f(k)^{2}}$ remains decreasing. Thus the absolute difference between the two bounds is bounded above by:
\begin{equation}
\abs*{\int_0^A \frac{1}{f(x)^{2}} \dd x - \int_1^{A+1} \frac{1}{f(x)^{2}} \dd x}
	= \abs*{\int_0^1 \frac{1}{f(x)^{2}} \dd x - \int_A^{A+1} \frac{1}{f(x)^{2}} \dd x}
	\leq 1
\end{equation}
It follows that 
\begin{equation}
\sum_{k=1}^{A} \frac{1}{f(k)^{2}}
	=\Theta\parens*{\int_1^{A+1} \frac{1}{f(x)^{2}}\dd x}
\end{equation}

Next consider the \acl{RHS} of \eqref{eqeqeqeqeq}.
The summand is increasing, so we get:
\begin{equation}
\frac{1}{A^{3/2}} \int_0^A f(x) \dd x
	\leq \frac{1}{A^{3/2}} \sum_{k=1}^{A} f(k)
	\leq \frac{1}{A^{3/2}}\int_1^{A+1} f(x) \dd x
\end{equation}
Now note that $f(x)<x$ must hold. Thus the absolute difference between the two bounds is bounded above by:
\begin{flalign}
\frac{1}{A^{3/2}} \abs*{\int_0^A f(x) \dd x  - \int_1^{A+1} f(x) \dd x}
	&= \frac{1}{A^{3/2}} \abs*{\int_A^{A+1} f(x) dx  - \int_0^{1} f(x) \dd x}
	\notag\\
&	\leq \frac{A}{A^{3/2}}
	= \bigoh(1).
\end{flalign}
It follows that  
\begin{equation}
\frac{1}{A^{3/2}} \sum_{k=1}^{A} f(k)
	= \Theta\parens*{\frac{1}{A^{3/2}}\int_1^{A+1} f(x) \dd x}
\end{equation}
With above approximations it follows that \cref{eqeqeqeqeq} holds if and only if the following equation holds:
\begin{equation}
\int_1^{A} \frac{1}{f(x)^{2}} \dd x
	= \Theta\parens*{\frac{1}{A^{3/2}}\int_1^{A} f(x) \dd x}
\end{equation}
as claimed.
\end{proof}

%

\bibliographystyle{te_BP}	
\bibliography{../bibtex/Citations}

\end{document}